\newcommand{\EoP}{\hbox{}\hfill\squareforqed\hbox{}}%
\newcommand{\defin}[1]{Definition~\ref{d.#1}}
\newcommand{\equnm}[1]{(\ref{q.#1})\xspace}
\newcommand{\figur}[1]{Figure~\ref{f.#1}}
\newcommand{\propo}[1]{Proposition~\ref{p.#1}}
\newcommand{\remar}[1]{Remark~\ref{r.#1}}
\newcommand{\secti}[1]{Section~\ref{s.#1}}
\newcommand{\theor}[1]{Theorem~\ref{t.#1}}
\newcounter{theor-tmp}
  {\setcounter{theor-tmp}{\value{theorem}}%
   \setcounter{theorem}{\value{#1}}%
   \begin{example}[continued]}%
  {\end{example}%
   \setcounter{theorem}{\value{theor-tmp}}}
\newcounter{exa-Ao}%
\newcommand{\LatinLocution}[1]{{\itshape #1}\xspace}
\newcommand{\cf}{\LatinLocution{cf.}}
\newcommand{\eg}{\LatinLocution{e.g.}}
\newcommand{\etc}{\LatinLocution{etc.}} % pas de blanc laiss\'e apr\`es etc.
\newcommand{\ie}{{that is, }}
\newcommand{\DSA}{DSA\xspace}
\newcommand{\PCSA}{PCSA\xspace}
\newcommand{\FPCSA}{FPCSA\xspace}
\newcommand{\e}{\text{\quad}}                 % un moins petit espace
\newcommand{\ee}{\text{\qquad}}               % un espace
\newcommand{\eee}{\text{\qquad \qquad}} % et un grand
\newcommand{\xmd}{\hspace{0.125em}} % espace entre les symboles
\newcommand{\msp}{\hspace{0.2em}} % espace entre les symboles
\newcommand{\quantvrg}{\, , \;}
\newcommand{\quantsp}{\ee }
\newcommand{\quantsmsp}{\e }
\newcommand{\eqpnt}{\makebox[0pt][l]{\:.}}%
\newcommand{\eqvrg}{\makebox[0pt][l]{\:,}}%
\newcommand{\EqVrgInt}{\: , \e }
\newcommand{\Defi}[2]%
    {\left\{#1\xmd\xmd\middle|\xmd\xmd#2\right\}}
\newcommand{\B}{\mathbb{B}}
\newcommand{\K}{\mathbb{K}}
\newcommand{\N}{\mathbb{N}}
\newcommand{\Z}{\mathbb{Z}}
\newcommand{\zeK}{0_{\K}}
\newcommand{\unK}{1_{\K}}
\newcommand{\aut}[1]{\left\langle\thinspace#1\thinspace\right\rangle}%
\newcommand{\auta}[1]{\left(\thinspace#1\thinspace\right)}%
\newcommand{\autiet}{\aut{I,E,T}}
\newlength{\vbh}\newlength{\vbd}\newlength{\vbt}%
\newcommand{\CompAut}[2]%
    {%
     \settodepth{\vbd}{\mbox{$\displaystyle{#1#2}$}}%
     \settoheight{\vbh}{\mbox{$\displaystyle{#1#2}$}}%
     \setlength{\vbt}{\vbh}\addtolength{\vbt}{\vbd}%
     {}%
     \psline[linewidth=0.8pt]{c-c}(0,-.65\vbd)(0,.9\vbh)%
     \hspace*{.15em}%
     {#1}%
     \hspace*{.15em}%
     \psline[linewidth=0.8pt]{c-c}(0,-.65\vbd)(0,.9\vbh)%
     }%
\newcommand{\ShiftInd}[1]{\raisebox{-0.3ex}{$\scriptstyle{#1}$}}
\newcommand{\Fut}[2]{{\mathsf{Fut}}_{\ShiftInd{#1}}(#2)}
\newcommand{\futa}[1]{\Fut{\Ac}{#1}}
\newcommand{\Ac}{\mathcal{A}}
\newcommand{\Bc}{\mathcal{B}}
\newcommand{\Aco}{\Ac_{1}}
\newcommand{\Am}{A_\mmark}
\newcommand{\mafig}[1]%
   {\refstepcounter{figure}\\
    \centering{\small\textbf{\figurename\thinspace\thefigure.} #1}}
\newcommand{\matab}[1]%
   {\refstepcounter{table}\\
    \centering{\small\textbf{\tablename\thinspace\thetable.} #1}}
\newcommand{\x}{\! \times \!}
\newcommand{\ab}{\{a,b\}}
\newcommand{\abe}{\ab^{*}}
\newcommand{\Ae}{A^{*}}
\newcommand{\Be}{B^{*}}
\newcommand{\Cc}{\mathbin{\mathcal{C}}}
\newcommand{\Pc}{\mathbin{\mathcal{P}}}
\newcommand{\Rc}{\mathbin{\mathcal{R}}}
\newcommand{\grando}[1]{\mathrm{O}\!\left(#1\right)}
\newcommand{\gtheta}[1]{\mathrm{\Theta}\!\left(#1\right)}
\renewcommand{\phi}{\varphi}
\def\dol{{\mathchoice
{\hbox{{\small $\textstyle \$ $}}}%
{\hbox{{\small $\textstyle \$ $}}}%
{\hbox{$\scriptstyle \$ $}}%
{\hbox{$\scriptscriptstyle \$ $}}%
}}
\newcommand{\mmark}{\dol}
\newcommand{\Adol}{A_{\dol}}
\newcommand{\Augm}[1]{\widetilde{#1}}%\check
\newcommand{\AgmA}{\Augm{\Ac}}
\newcommand{\ConjAuto}[1]{\overset{#1}{\Longrightarrow}}
\newcommand{\amlg}[1]{X_{#1}}%
\newcommand{\slct}[1]{Y_{#1}}%
\newcommand{\amphi}{\amlg{\varphi}}%
\newcommand{\ampsi}{\amlg{\psi}}%
\newcommand{\amphip}{\amlg{\varphi'}}%
\newcommand{\ampsip}{\amlg{\psi'}}%
\newcommand{\amome}{\amlg{\omega}}%
\newcommand{\amPc}{\amlg{\Pc}}%
\newcommand{\slPc}{\slct{\Pc}}%
\newcommand{\Trsp}[1]{#1^{\mathsf{t}}}
\newcommand{\matmul}{\!\mathbin{\cdot}\!}
\newcommand{\mmul}{\matmul}
\newcommand{\ccdot}%
   {\mathbin{\raisebox{0.2ex}{${\scriptscriptstyle\circ}$}}}
\newcommand{\compos}{\ccdot}
\newcommand{\code}[1]{\textsl{#1}}
\newcommand{\keyw}[1]{\textsf{#1}}
\DeclareMathOperator{\meetop}{\code{meet}}
\newcommand{\meet}[1]{\meetop[#1]}
\DeclareMathOperator{\sigop}{\code{sig}}
\DeclareMathOperator{\glsigop}{\code{GSig}}
\newcommand{\sigf}[1]{\sigop\,[#1]}
\newcommand{\sig}[2]{\sigop\,[#1](#2)}
\newcommand{\glsigg}[1]{\glsigop[#1]}
\newcommand{\GSig}[1]{\glsigg{#1}} %for compatibility
\DeclareMathOperator{\spltop}{\code{split}}
\newcommand{\spltf}[1]{\spltop[#1]}
\newcommand{\splt}[2]{\spltop[#1](#2)}
\newcommand{\splteq}{\mathbin{\wedge}}
\newcommand{\matriceuu}[1]%
    {\begin{pmatrix} #1 \end{pmatrix}}
\newcommand{\matricedd}[4]%
    {\begin{pmatrix} #1 & #2 \\ #3 & #4 \end{pmatrix}}
\newcommand{\vecteurd}[2]%
    {\begin{pmatrix} #1 \\ #2 \end{pmatrix}}
\newcommand{\ligned}[2]%
    {\begin{pmatrix} #1 & #2 \end{pmatrix}}
\newcommand{\matricett}[9]%
    {\begin{pmatrix}  #1 & #2 & #3 \\
                      #4 & #5 & #6 \\
                      #7 & #8 & #9 \end{pmatrix}}
\newcommand{\vecteurt}[3]%
    {\begin{pmatrix} #1 \\ #2 \\ #3 \end{pmatrix}}
\newcommand{\lignet}[3]%
    {\begin{pmatrix} #1 & #2 & #3 \end{pmatrix}}
\newlength{\jsWidthCol}
\newlength{\blocinterligne}
\newlength{\blocinterligned}
\newlength{\temparraycolsep}
\newlength{\longueurbloc}
\newlength{\hauteurbloc}
\newlength{\centragebloc}
\newlength{\longueurblc}
\newlength{\hauteurblc}
\newlength{\centrageblc}
\newcommand{\blocligne}[1]%
    {\framebox[\longueurbloc]{$#1$}}
\newcommand{\blocmatrice}[1]%
    {\framebox[\longueurbloc]{\rule[\centragebloc]{0mm}{\hauteurbloc}$#1$}}
\newcommand{\blocvecteur}[1]%
    {\framebox{\rule[\centragebloc]{0mm}{\hauteurbloc}$#1$}}
\newcommand{\blcligne}[1]%
    {\framebox[\longueurblc]{$#1$}}
\newcommand{\blcmatrice}[1]%
    {\framebox[\longueurblc]{\rule[\centrageblc]{0mm}{\hauteurblc}$#1$}}
\newcommand{\blcvecteur}[1]%
    {\framebox{\rule[\centrageblc]{0mm}{\hauteurblc}$#1$}}
\newcommand{\matriceddblvs}[4]%%
   {\setlength{\temparraycolsep}{\arraycolsep}%
    \setlength{\arraycolsep}{1.3pt}%
        \renewcommand{\arraystretch}{1.2}%
        \left (%
    \begin{array}{cc}%
                #1  & \blcligne{#2} \\
            \blcvecteur{#3} & \blcmatrice{#4}
        \end{array}%
        \right )%
    \setlength{\arraycolsep}{\temparraycolsep}%
        \renewcommand{\arraystretch}{1.0}%
   }%
\newcommand{\vecteurdblvs}[2]%
   {\setlength{\temparraycolsep}{\arraycolsep}%
    \setlength{\arraycolsep}{1.5pt}%
        \renewcommand{\arraystretch}{1.2}%
        \left (%
    \begin{array}{c}%
                #1  \\
            \blcvecteur{#2}
        \end{array}%
        \right )%
    \setlength{\arraycolsep}{\temparraycolsep}%
        \renewcommand{\arraystretch}{1.0}%
   }%
\newcommand{\lignedblvs}[2]%
   {\setlength{\temparraycolsep}{\arraycolsep}%
    \setlength{\arraycolsep}{1.5pt}%
        \renewcommand{\arraystretch}{1.2}%
        \left (%
    \begin{array}{cc}%
                #1  & \blcligne{#2}
        \end{array}%
        \right )%
    \setlength{\arraycolsep}{\temparraycolsep}%
        \renewcommand{\arraystretch}{1.0}%
   }%
\newcommand{\matricettblvs}[9]%%
   {\setlength{\temparraycolsep}{\arraycolsep}%
    \setlength{\arraycolsep}{1.5pt}%
        \renewcommand{\arraystretch}{1.2}%
        \left (%
    \begin{array}{ccc}%
                #1  & \blcligne{#2} & #3\\
            \blcvecteur{#4} & \blcmatrice{#5} & \blcvecteur{#6}\\
                #7  & \blcligne{#8} & #9\\
        \end{array}%
        \right )%
    \setlength{\arraycolsep}{\temparraycolsep}%
        \renewcommand{\arraystretch}{1.0}%
   }%
\newcommand{\vecteurtblvs}[3]%
   {\setlength{\temparraycolsep}{\arraycolsep}%
    \setlength{\arraycolsep}{1.5pt}%
        \renewcommand{\arraystretch}{1.2}%
        \left (%
    \begin{array}{c}%
                #1  \\
            \blcvecteur{#2}\\
                #3
        \end{array}%
        \right )%
    \setlength{\arraycolsep}{\temparraycolsep}%
        \renewcommand{\arraystretch}{1.0}%
   }%
\newcommand{\lignetblvs}[3]%
   {\setlength{\temparraycolsep}{\arraycolsep}%
    \setlength{\arraycolsep}{1.5pt}%
        \renewcommand{\arraystretch}{1.2}%
        \left (%
    \begin{array}{ccc}%
                #1  & \blcligne{#2} & #3
        \end{array}%
        \right )%
    \setlength{\arraycolsep}{\temparraycolsep}%
        \renewcommand{\arraystretch}{1.0}%
   }%
\newcommand{\matricettblblvs}[9]%%
   {\setlength{\temparraycolsep}{\arraycolsep}%
    \setlength{\arraycolsep}{1.5pt}%
        \renewcommand{\arraystretch}{1.2}%
        \left (%
    \begin{array}{ccc}%
                #1  & \blcligne{#2} & \blcligne{#3}\\
            \blcvecteur{#4} & \blcmatrice{#5} & \blcmatrice{#6}\\
                \blcvecteur{#7}  & \blcmatrice{#8} & \blcmatrice{#9}\\
        \end{array}%
        \right )%
    \setlength{\arraycolsep}{\temparraycolsep}%
        \renewcommand{\arraystretch}{1.0}%
   }%
\newcommand{\vecteurtblblvs}[3]%
   {\setlength{\temparraycolsep}{\arraycolsep}%
    \setlength{\arraycolsep}{1.5pt}%
        \renewcommand{\arraystretch}{1.2}%
        \left (%
    \begin{array}{c}%
                #1  \\
            \blcvecteur{#2}\\
                \blcvecteur{#3}
        \end{array}%
        \right )%
    \setlength{\arraycolsep}{\temparraycolsep}%
        \renewcommand{\arraystretch}{1.0}%
   }%
\newcommand{\lignetblblvs}[3]%
   {\setlength{\temparraycolsep}{\arraycolsep}%
    \setlength{\arraycolsep}{1.5pt}%
        \renewcommand{\arraystretch}{1.2}%
        \left (%
    \begin{array}{ccc}%
                #1  & \blcligne{#2} & \blcligne{#3}
        \end{array}%
        \right )%
    \setlength{\arraycolsep}{\temparraycolsep}%
        \renewcommand{\arraystretch}{1.0}%
   }%
\newcommand{\matblco}[3]{\matricettblvs{0}{#1}{0}%
                                   {0}{#2}{#3}%
								   {0}{0}{0}}%%
\newcommand{\matblct}[1]{\matricettblvs{1}{0}{0}%
                                   {0}{#1}{0}%
								   {0}{0}{1}}%%
\begin{document}

\setcounter{page}{195}
\publyear{22}
\papernumber{2126}
\volume{186}
\issue{1-4}

       \finalVersionForARXIV
             %%%\finalVersionForIOS

\title{Morphisms and Minimisation of  Weighted  Automata}

\author{Sylvain Lombardy\\
LaBRI - UMR 5800 - Bordeaux INP - Bordeaux University - CNRS\\
Bordeaux,  France\\
Sylvain.Lombardy@labri.fr
\and
Jacques Sakarovitch\thanks{Address for correspondence: IRIF - CNRS/Paris Cit\'{e} University  and  Telecom Paris, IPP.}
 \\
IRIF - CNRS/Paris Cit\'{e} University  and  Telecom Paris, IPP\\
Paris, France\\
sakarovitch@enst.fr
}

\runninghead{S. Lombardy and J. Sakarovitch}{Morphisms and Minimisation of  Weighted  Automata}

\maketitle

\begin{abstract}
This paper studies the algorithms for the minimisation of weighted
automata. It starts with the definition of morphisms --- which generalises
and unifies the notion of  \textit{bisimulation} to the whole class of
weighted automata --- and the unicity of a minimal quotient for every
automaton, obtained by partition refinement.

From a general scheme for the refinement of partitions, two
strategies are considered for the computation of the minimal
quotient: the \textit{Domain Split} and the \textit{Predecesor Class Split} algorithms.
They correspond respectivly to the classical Moore and Hopcroft
algorithms for the computation of the minimal quotient of
deterministic Boolean automata.%

We show that these two strategies yield algorithms with the same
quadratic complexity and we study the cases when the second one can be
improved in order to achieve a complexity similar to the one of
Hopcroft algorithm.
\end{abstract}

\section{Introduction}%
\label{s.int}%

The main goal of this paper is to report on the design and
analysis of algorithms for the computation of the minimal quotient of
a weighted automaton.

The \emph{existence} of a minimal deterministic finite automaton,
canonically associated with every regular language is one of the
basic and fundamental results in the theory of classical finite
automata~\cite{HopcMotwUllm06}.
The problem of the \emph{computation} of this minimal (deterministic)
automaton has given rise to an extensive literature, due to the
importance of the problem, both from a theoretical and practical
point of view.
The chapter \emph{Minimisation of automata} of the recently published
\textit{Handbook of Automata Theory}~\cite{Pin21Ed}
%%\NeM{Ref? pour être // avec HWA.}%
is devoted to this question by
Jean~Berstel and his colleagues~\cite{BersEtAl21}.
It provides a rich and detailed account on the subject, together with
an extensive list of references.

In contrast, the problem of minimisation of nondeterministic Boolean,
or of weighted, automata is much less documented.
The chapter \emph{Algorithms for weighted automata} in the
\textit{Handbook of Weighted Automata}~\cite{DrosEtAl09Ed} deals only
with the case of so-called deterministic weighted automata on which
the algorithms for deterministic Boolean automata can be generalised.
The main reason is probably that the problem of finding a (Boolean)
automaton equivalent to a given automaton and with a minimal number
of states is untractable and known to be NP-hard and even PSPACE-complete~\cite{GareJohn79}.
 Of course, the case of weighted automata, with arbitrary weight
semiring, is at least as difficult.

There is another way to look at the minimisation of deterministic
Boolean automata.
The result, the minimal automaton, is obtained by merging states and
in this way can be seen as the image of the original automaton by a
map that preserves the structure of the computations, a map it is
natural to call {morphism}.
The other purpose of this paper, which indeed comes first, is then to
set up the definition of \emph{morphisms for arbitrary (finite) weighted
automata}, and in particular for nondeterministic Boolean automata.
The image of a morphism is a \emph{quotient} and, as in the original
case, an automaton has a \emph{unique minimal quotient}
(Theorem 3.12). %(\theor{min-quo}).
The difference with the original case is that the minimal quotient is
not canonically attached to the series or the language realised by
the automaton but to the automaton itself.

To tell the truth, this point of view is not completely new.
For transition systems for instance, which are
automata without initial and final states,
it is common knowledge that the notion of
\emph{bisimulation} allows to merge states in order to get a system
in which the computations are preserved and the \emph{coarsest
%%\NEM{references? Benson n'est pas réellement satisfaisant}%
bisimulation} yields a minimal system.
%the
The notion of bisimulation has also been extended to some classes of
weighted automata, for instance
%%\NeM{corrected!!}%
\emph{probabilistic
automata}~\cite{BaieHerm97}, or automata with weights in a field or
division ring~\cite{Bore09}.
It is also known that in all these cases the coarsest bisimulation is
computed by iterative refinements of set partitions.

We show here that the classical algorithms of partition refinement
that are used for deterministic Boolean automata (widely known as
Moore and Hopcroft algorithms) may be analysed and abstracted in such
a way they readily extend to weighted automata in full generality,
without any assumption
on the weight semiring and on the automaton structure.
This can be sketched as follows. In a partition refinement algorithm, and at a given step of the
procedure, a partition~$\Pc$ of the state set of an automaton, and
a class~$C$ of~$\Pc$ are considered. There are then two possible strategies for determining a refinement
of~$\Pc$. In the first one, the class~$C$ itself is split, by considering the
labels of the \emph{outgoing transitions} from the different states
in~$C$. This is an extension of the Moore algorithm and we call it the
\emph{Domain Split Algorithm}. In the second strategy, the class~$C$ determines the splitting of
classes that contain the origins of the \textit{transitions incoming}
to the states in~$C$.
We call it the \emph{Predecessor Class Split Algorithm} and it can be
seen as inspired by the Hopcroft algorithm.

\eject

Although the two strategies yield distinct orderings of the splitting
of classes, the two algorithms have many similarities that we describe
in this paper.
Not only do they have the same time complexity, in~$\grando{n\xmd m}$,
where~$n$ is the number of states of the automaton and~$m$ the number
of its transitions, but the criterium for distinguishing states ---
the splitting process --- is based on the same state function that we
call \emph{signature}.
And in both cases, achieving the above mentioned complexity implies
that signatures are managed through the same efficient data structure
that implements a \emph{weak sort}.

\enlargethispage*{3ex}
Finally, our analysis allows to describe a condition --- which we call
\emph{simplifiable signatures} --- under which the Predecessor Class
Split Algorithm can be tuned in such a way it achieves a better time
in~$\grando{m\xmd \log n}$, the Graal set up by Hopcroft algorithm for
deterministic Boolean automata (in which~$m=\alpha\xmd n$,
where~$\alpha$ is the size of the alphabet).

In conclusion, our study of the minimisation algorithms, with the
identification of the concept of signature, allows to reverse the
perspective.
We have not transformed the known algorithms on deterministic Boolean
automata by equipping them with supplementary features that allow to
treat weighted automata, we have just expressed them in a way they can
deal with all weighted automata and they apply then to deterministic
Boolean automata just as they do with any other, not even in a simpler
way.

\medskip
The paper is organised as follows.
In \secti{wei-aut}, we begin with the definitions and notation used
for weighted automata.
We add the definition of a special class of automata, a kind of
normalised ones, which we call \emph{augmented automata} and which
will be useful for the writing of algorithms.
At \secti{min-quo-aut}, we give two equivalent definitions for
morphisms of weighted automata and show the existence of a unique
minimal quotient.

In Section 4, we describe a general procedure for the
partition refinement, the notion of signature, and both the
Domain Split and Predecessor Class Split Algorithms.
In the next \secti{imp-alg}, we describe the way to implement the weak
sort, how it is used in the two algorithms, and we analyse their
complexity,
%\NeM{corrected!!}%
whereas we explain in \secti{fas-PCS} under which
conditions and how it is possible to improve the complexity of the
latter.

The algorithms and their efficient implementation described in this
paper are accessible in the \textsc{Awali} plateform recently made
available to the public~\cite{Awali}.
Some experiments presented at \secti{ben-mar} show that the
algorithms behave with their theoretical complexity.
These experiments were presented at the CIAA conference in 2018~\cite{LombSaka18}.

\section{Weighted automata} % and their implementation
\label{s.wei-aut}%

In this paper, we deal with (finite) automata over a free monoid~$\Ae$
with weight in a semiring~$\K$, also called $\K$-automata.
`Classical' automata are the $\B$-automata where~$\B$ is the Boolean
semiring.

Indeed, all what follows apply as well to automata over a monoid~$M$
which is not necessarily free, for instance to \emph{transducers} that
are automata over the non free monoid~$\Ae\x\Be$.
This generalisation holds as such automata are considered --- as far
as the constructions and results developped here are concerned --- as
automata over a free monoid~$C^{*}$, where~$C$ is the set of labels on
the transitions of the automaton.
Let us note that there exists no theory of quotient that takes into
account non trivial relations between labels.
The only construction on automata that does (without destroying their
structure) is the \emph{circulation of labels} involved for instance
in the \emph{synchronisation} of transducers
(\cf~\cite{Saka09,FrouSaka93}) or as a \emph{preliminary} for the
minimisation of sequential transducers (\cf~\cite{Saka09,Reut90}).

\subsection{Definition and notation}

We essentially follow the definitions and notation of~\cite{Saka09}.
The model of weighted automaton used in this paper is more restricted
though, for both theoretical and computational efficiency.

\medskip
A $\K$-automaton~$\Ac$ over~$\Ae$ is a directed graph whose vertices
are called \emph{states} and whose edges, called \emph{transitions},
are labelled by a pair made of a letter in~$A$ and a weight in~$\K$,
together with an \emph{initial function} and a \emph{final function},
both from the set of states to~$\K$.
The states in the support of the initial function are usually called
\emph{initial states}, those in the support of the final function,
\emph{final states.}
%such
\figur{ex1} shows a $\Z$-automaton~$\Ac_{1}$ over~$\abe$ with
the usual graphical conventions.

 \begin{figure}[!h]
\vspace*{1mm}
\begin{center}
\includegraphics[width=5.22cm]{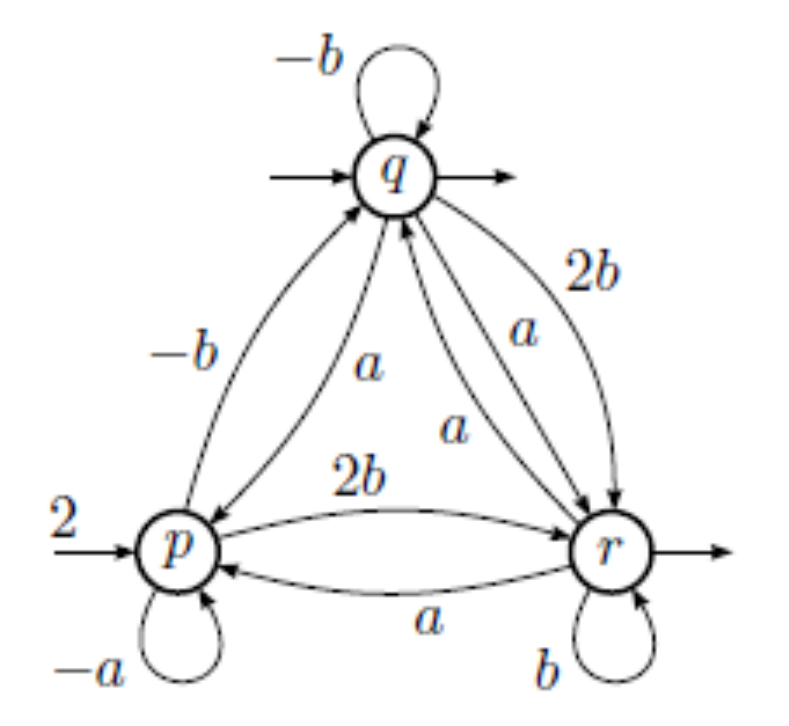}\vspace*{-3mm}
\caption{The $\Z$-automaton~$\Ac_{1}$}
\label{f.ex1}%
\end{center}\vspace*{-2mm}
\end{figure}

Let~$\Ac$ be a $\K$-automaton with set of states~$Q$; we say that~$Q$
is the \emph{dimension} of~$\Ac$ and we denote~$\Ac$ by a triple
$\msp\Ac=\aut{I,E,T}\msp$ where~$I$ and~$T$ are vectors of
dimension~$Q$ with entries in~$\K$ that denote the initial function
and the final functions and~$E$ is the adjacency matrix of~$\Ac$,
that is, a $Q\x Q$-matrix whose~$(p,q)$ entry is the sum of the
weighted labels of the transitions from~$p$ to~$q$.
We also denote by~$E$ the map $\msp E\colon Q\x A\x Q \to\K\msp$
and~$E(p,a,q)$ is the weight of the (unique) transition that goes
from~$p$ to~$q$ and that is labelled by~$a$, if it exists (otherwise
its value is~$\zeK$).

\begin{example}
Let~$\Aco=\aut{I_{1}, E_{1}, T_{1}}$ of dimension~$Q_{1}=\{p,q,r\}$:
\begin{equation}
\Ac_{1} = \aut{\lignet{2}{1}{0},%
               \matricett{-a}{-b}{2\xmd b}%
				         {a}{-b}{a + 2\xmd b}%
						 {a}{a}{b},%
			   \vecteurt{0}{1}{1}}
\qquad  \mbox{and}\quad E_1 (q, b, r) =2 \eqpnt
\notag
\end{equation}
\end{example}

\medskip

The \emph{behaviour} of a $\K$-automaton~$\Ac=\autiet$ is the
\emph{series} realised by~$\Ac$ and is denoted by~$|\Ac|$:
\begin{equation}
|\Ac|= I \cdot E^{*} \cdot T
\notag
\end{equation}
where $\msp E^{*} = \sum_{n\in\N} E^{n}\msp$ (the infinite sum does
not bring any problem as our definition insures that the entries
of~$E^{n}$ are homogenous polynomials of degree~$n$).
Two automata are said to be \emph{equivalent} if their behaviours are
equal. If $\K=\B$, $|\Ac|$ is the \emph{language} accepted by~$\Ac$.

The \emph{future} of a state~$p$ of~$\Ac$, denoted by~$\futa{p}$, is
the series realised by~$\Ac$ when~$p$ is taken as the unique initial
state, with initial value equal to~$\unK$. It holds
\begin{equation}
\futa{p} = I_{p}\matmul E^{*}\matmul T
\eqvrg
\label{q.fut-1}
\end{equation}
where~$I_{p}$ denotes the characteristic (row-) vector of~$p$, that
is, the entry of index~$p$ is the only non-zero entry and is equal
%\NeM{corrected!!}%
to~$\unK$.

\subsection{The augmented automaton}

The algorithms we describe and study in the following sections
classify states of an automaton according to (the labels of) the in-
and out-going transitions and (the values of) the initial and final
functions.
In order to express in the same way the conditions stated on the
transitions on one hand and on the initial and final functions on the
other,
%\NeM{comma added!!}%
and to compute uniformly with the former and the latter, it is
convenient to transform the automaton and to put them in a special form
which we call \emph{augmented}.

\begin{figure}[!b]\small
\arraycolsep=1em
\centering
$\begin{array}{c|cccc}
 & p & q & r & t \\%\rightarrow
\hline
i & 2\xmd\mmark & \mmark & 0 & 0\\
p & -a & -b & 2\xmd b & 0 \\
q & a & -b & a+2\xmd b & \mmark\\
r & a & a & b  & \mmark\\
% \multicolumn{2}{c}{~}\\
\end{array}$\vspace*{-2mm}
\caption{$\Augm{E}_{1}$, the adjacency matrix of~$\AgmA_{1}$}
\label{f.ex1b}\vspace*{-1mm}
\end{figure}

\smallskip
A $\K$-automaton $\Ac=\aut{I,E,T}$ of dimension~$Q$ over~$\Ae$ is
transformed into its augmented version, denoted by~$\AgmA$, in two
steps.
First, the alphabet~$A$ is supplemented with a new letter~$\dol$,
which will be used as a left and right marker.
We write~$\Adol$ for $\msp\Adol=A\cup\{\dol\}\msp$, the
\emph{augmented alphabet}.
Second, $\Ac$ is somehow \emph{normalised} by the adjunction of two
new states~$\xmd i\xmd$ and~$\xmd t\xmd$ to~$Q$ and of transitions
that go from~$i$ to every initial state~$p$ with label~$\dol$ and with
weight~$I_{p}$ and transitions that go from every final state~$q$
to~$t$ with label~$\dol$ and with weight~$T_{q}$.
As it will be useful to make its state set explicit, we denote~$\AgmA$
by~$\AgmA=\auta{Q,i,\Augm{E},t}$ where~$\Augm{E}$ contains the
complete description of~$\AgmA$, as seen in~\equnm{aug}.
\begin{equation}
\Ac = \aut{I,E,T}
\ee\text{and}\ee
\AgmA = \auta{Q,i,\Augm{E},t}
\e\text{with}\e
\Augm{E} = \matblco{I\dol}{E}{T\dol}
\eqpnt
\label{q.aug}
\end{equation}
Finally, the only initial state of~$\AgmA$ is~$i$, with weight~$\unK$,
and its only final state is~$t$, also with weight~$\unK$.
If~$w$ is in~$\Ae$, there is a 1-1 correspondence between the
successful computations with label~$w$ in~$\Ac$ and the computations
with label~$\dol\xmd w\xmd\dol$ in~$\AgmA$, hence they are given the
same weight by the two automata.
Hence, $|{\AgmA}|=\$ |\Ac|\$$.

%\iffundam
\begin{example}[continued]
%\begin{exacnt}{exa-Ao}
The adjacency matrix~$\Augm{E}_{1}$ of the $\Z$-automaton~$\AgmA_{1}$ is shown
on \figur{ex1b}.
There is no column~$i$ nor row~$t$ in the table for there is
no transitions incoming to~$i$ nor transitions outgoing from~$t$.
\end{example}
%\else
%\end{exacnt}

\section{Morphisms of weighted automata and minimal quotients}
\label{s.min-quo-aut}%

The notion of morphism for deterministic Boolean automata does not
%\NeM{typo!!}%
raise difficulties and makes consensus.
In contrast, the one of morphism for arbitrary Boolean automata and
even more for (arbitrary) weighted automata is far more problematic.
In most cases, the concept is given other names, most often
\emph{simulation} or \emph{bisimulation}, when not more complicated
such as \emph{weak bisimulation} or \emph{pure epimorphism}, and
definitions that may depend on the case considered and hide their
generality.

\subsection{Amalgamation matrices and morphisms}
%\NeM{Change. see next subsection!!}%

We start with the definition of \emph{morphisms} --- of
\emph{Out-morphisms} indeed as we shall see --- by means of the notion
of \emph{conjugacy} of automata, as we already did in~\cite{Saka09}
or~\cite{BealEtAl06} and which is the most concise one.
We then translate it into a more explicit definition via equivalence
that is more suited for computations.

\begin{definition}
\label{d.con-jug}%
Let
$\msp\Ac=\aut{I,E,T}\msp$
and
$\msp\Bc=\aut{J,F,U}\msp$
be two $\K$-automata, of dimension~$Q$ and~$R$ respectively.
We say that \emph{$\Ac$ is conjugate to~$\Bc$ by~$X$} if there exists
a $Q\x R$-matrix~$X$ with entries in~$\K$ such that
\begin{equation}
I\matmul X=J\EqVrgInt
E\matmul X=X\matmul F\EqVrgInt\text{and} \e
T=X\matmul U
\eqpnt
\label{q.con-aut}
\end{equation}
The matrix~$X$ is the \emph{transfer matrix} of the conjugacy
and we write $\msp\Ac\ConjAuto{X}\Bc\msp$.
\end{definition}

If~$\Ac$ is conjugate to~$\Bc$,
then, for every~$n$, the sequence of equalities holds:
\begin{equation*}
I\mmul E^{n}\mmul T =
I\mmul E^{n}\mmul X\mmul U =
I\mmul E^{n-1}\mmul X\mmul F \mmul U = \ldots =
I\mmul X\mmul F^{n} \mmul U = J \mmul F^{n} \mmul U
\eqvrg
\end{equation*}
from which
$\msp I\mmul E^{*}\mmul T = J \mmul F^{*} \mmul U\msp$
directly follows.
And this is stated as the following.

\begin{proposition}
\label{p.cnj-equ}%
If $\msp\Ac\msp$ {is conjugate to} $\msp\Bc\msp$,
then $\msp\Ac\msp$ and  $\msp\Bc\msp$ are equivalent.
\EoP
\end{proposition}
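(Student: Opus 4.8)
The plan is to prove the two behaviours equal by establishing them power by power: I would show the scalar identity $I\mmul E^{n}\mmul T = J\mmul F^{n}\mmul U$ for every~$n$ in~$\N$, and then sum over~$n$ to obtain $I\mmul E^{*}\mmul T = J\mmul F^{*}\mmul U$, which is exactly $|\Ac|=|\Bc|$. All three of the relations of \defin{con-jug} will be used, their role being to let the transfer matrix~$X$ migrate from the far right of the product $I\mmul E^{n}\mmul T$ across to the far left, turning powers of~$E$ into powers of~$F$ along the way.

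The key step is the commutation fact $E^{n}\mmul X = X\mmul F^{n}$, valid for every~$n$, which I would obtain by an immediate induction from the single conjugacy relation $E\mmul X = X\mmul F$: the base case $n=0$ is trivial, and the inductive step reads $E^{n+1}\mmul X = E^{n}\mmul(E\mmul X) = E^{n}\mmul X\mmul F = X\mmul F^{n}\mmul F = X\mmul F^{n+1}$. Granting this, and starting from $T = X\mmul U$, one writes $I\mmul E^{n}\mmul T = I\mmul E^{n}\mmul X\mmul U = I\mmul X\mmul F^{n}\mmul U = J\mmul F^{n}\mmul U$, where the last equality uses $I\mmul X = J$. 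This is precisely the chain of equalities displayed just before the statement.

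Summing these termwise identities over all~$n\in\N$ then yields $I\mmul E^{*}\mmul T = J\mmul F^{*}\mmul U$, namely $|\Ac|=|\Bc|$. There is essentially no obstacle in the argument; the only point deserving a word of care is the passage from the finite identities to the two infinite sums, and this is settled by the observation recorded in \secti{wei-aut} that each entry of~$E^{n}$ (resp.\ of~$F^{n}$) is a homogeneous polynomial of degree~$n$. This guarantees that $E^{*}$ and $F^{*}$ are well-defined formal series and that the summation raises no convergence difficulty, so the termwise equalities transfer without change to the behaviours.
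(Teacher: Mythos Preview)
Your proposal is correct and follows exactly the paper's argument: the paper also establishes $I\mmul E^{n}\mmul T = J\mmul F^{n}\mmul U$ for every~$n$ via the chain $I\mmul E^{n}\mmul T = I\mmul E^{n}\mmul X\mmul U = \cdots = I\mmul X\mmul F^{n}\mmul U = J\mmul F^{n}\mmul U$ (which you explicitly acknowledge), and then passes to~$E^{*}$ and~$F^{*}$. Your added remarks --- the explicit induction for $E^{n}\mmul X = X\mmul F^{n}$ and the word on well-definedness of the star --- merely spell out details the paper leaves implicit.
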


The conjugacy relation is not an equivalence relation as it is
reflexive and transitive but not symmetric.

A \emph{surjective} map $\msp\varphi\colon Q\rightarrow R \msp$ is
completely described by the $Q\x R$-matrix~$\amphi$ whose
$(q,r)$-th entry is~$1$ if $\varphi(q)=r$, and~$0$ otherwise.
Since~$\varphi$ is a map, every row of~$\amphi$ contains exactly
one~$1$ and since~$\varphi$ is surjective, every column of~$\amphi$
contains at least one~$1$.
Such a matrix is called an \emph{amalgamation matrix} in the setting
of symbolic dynamics (\cite{LindMarc95}).
By convention, if we deal with $\K$-automata, an amalgamation matrix
is silently assumed to be a $\K$-matrix, that is, the null entries
are equal to~$\zeK$ and the non-zero entries to~$\unK$.

\begin{definition}
\label{d.out-mor}%
 Let $\msp\Ac=\aut{I,E,T}\msp$ and $\msp\Bc=\aut{J,F,U}\msp$ be two
 $\K$-automata of dimension~$Q$ and~$R$ respectively.
A surjective map $\msp\varphi\colon Q\rightarrow R \msp$
is a \emph{morphism} (from~$\Ac$ onto~$\Bc$) if~$\Ac$ is conjugate
to~$\Bc$ by~$\amphi$: $ \msp\Ac\ConjAuto{\amphi}\Bc\msp$, \ie
\begin{equation}
I\matmul \amphi=J\EqVrgInt
E\matmul \amphi=\amphi\matmul F\EqVrgInt\text{and} \e
T=\amphi\matmul U
\eqvrg
\label{q.mor-con}
\end{equation}
and
we write $\msp\varphi\colon\Ac\rightarrow\Bc\msp$.

\medskip
We also say that~$\Bc$ is a \emph{quotient} of~$\Ac$, if there exists
a morphism~$\msp\varphi\colon\Ac\rightarrow\Bc\msp$.
\end{definition}

It is straightforward that the composition of two morphisms is a
morphism.
And by \propo{cnj-equ}, any quotient of~$\Ac$ is equivalent to~$\Ac$.

\defin{out-mor} may be given a form which will probably be more
eloquent to many than a mere matrix equation, and which makes clear that
the automaton~$\Bc$, and its state set~$R$, are immaterial and that
the conditions are upon~$\Ac$ and the map equivalence of~$\varphi$
only.
Above all, it will be the one used in the design of the algorithms to
come.
We describe it in the next subsection.
%\NeM{This is the way I answer to the reviewer!!}%

\subsection{Morphisms and congruences}

We start with some definitions and notation to deal with
equivalences.
An \emph{equivalence} on a set~$Q$ is a \emph{partition} of~$Q$, that
is, a set of non-empty disjoint subsets
%\NeM{2 corrections!!}%
of~$Q$, called \emph{classes},
whose union is equal to~$Q$.
If~$\Pc$ is an equivalence on~$Q$, we denote by~$\Pc$ both the
\emph{set} of classes in the partition as well as the \emph{relation}
on~$Q$ determined by the partition, that is, for every pair~$(p,q)$ of
elements of~$Q$, $p\Pc q$ if and only~$p$ and~$q$ belong to a same
class~$C$ of~$\Pc$.

Equivalences and surjective maps are indeed the same objects, seen
from a slightly different perspective.
An equivalence~$\Pc$ on~$Q$ determines the surjective map that sends
every~$q$ in~$Q$ onto its class~$C$ modulo~$\Pc$ and conversely every
surjective map~$\phi$ determines the map equivalence.
In both cases, they are represented by amalgation matrices.

Let us introduce a last definition.
Let~$\Pc$ be an equivalence on~$Q$ and~$\amPc$ its amalgamation
matrix.
From~$\amPc$ we construct a \emph{selection matrix}~$\slPc$ by
transposing~$\amPc$ and by zeroing some of its non-zero entries in
%\NeM{corrected!!}%
such a way that~$\slPc$ is row-monomial, with \emph{exactly one}~$1$
per row.
A matrix~$\slPc$ is not uniquely determined by~$\Pc$ but also
depends on the choice of the entry which is kept equal to~$1$, that
%\NeM{piece of sentence added!!}%
is, of a `representative' in each class of~$\Pc$.
\begin{definition}
\label{d.con}
Let~$\Ac$ be an automaton of dimension~$Q$.
We call \emph{congruence} on~$Q$ the map equivalence of a
morphism~$\phi$ from~$\Ac$ onto an automaton~$\Bc$.
\end{definition}

\begin{proposition}
\label{p.con}
Let $\Ac=\autiet$ be a $\K$-automaton over~$\Ae$ of dimension~$Q$.
An equivalence~$\Pc$ on~$Q$ is a \emph{congruence} if and only if the
following holds:
\begin{align}
\forall p,q\in Q\quantsp
p\Pc q &\e\Longrightarrow\e
\forall a\in A\quantvrg %\Am
\forall C\in\Pc\quantsmsp
\sum_{r\in C}E(p,a,r)=\sum_{r\in C}E(q,a,r)
\eqvrg
\e
\label{q.con-1}
\\
\forall p,q\in Q\quantsp
p\Pc q &\e\Longrightarrow\e
T_{p} = T_{q}
\eqpnt
\label{q.con-2}
\end{align}
\end{proposition}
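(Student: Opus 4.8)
The plan is to prove both directions of the equivalence by directly unpacking the matrix equations in \equnm{mor-con} in terms of the amalgamation matrix~$\amPc$ and the classes of~$\Pc$, using the dictionary that an equivalence and its associated surjective map carry the same information. First I would fix notation: let~$\phi\colon Q\to R$ be the canonical surjection sending each state to its class modulo~$\Pc$, so that~$R=\Pc$, and let~$\amPc=\amlg{\phi}$ be its amalgamation matrix, whose~$(q,C)$ entry is~$\unK$ when~$q\in C$ and~$\zeK$ otherwise. The whole proof then consists of reading the three conditions of \defin{out-mor} --- namely $I\matmul\amPc=J$, $E\matmul\amPc=\amPc\matmul F$, and $T=\amPc\matmul U$ --- entry by entry and showing that the nontrivial content is exactly \equnm{con-1} and \equnm{con-2}.

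For the forward direction, I would assume~$\Pc$ is a congruence, so there is an automaton~$\Bc=\aut{J,F,U}$ with~$\phi\colon\Ac\to\Bc$. The key computation is the middle equation~$E\matmul\amPc=\amPc\matmul F$. Reading off the~$(p,C)$ entry, the left-hand side is~$\sum_{r\in C}E(p,r)$ where~$E(p,r)$ abbreviates the~$(p,r)$ entry of~$E$, while the right-hand side is~$F(\phi(p),C)$, which depends on~$p$ only through its class. Hence whenever~$p\Pc q$ the two left-hand sums agree; refining this per letter (using that~$E(p,r)=\sum_{a\in A}E(p,a,r)$ and that distinct letters give independent contributions, or equivalently working letter by letter in the matrix indexed by~$Q\x A\x Q$) yields \equnm{con-1}. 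Similarly, the~$p$-th entry of~$T=\amPc\matmul U$ is~$U_{\phi(p)}$, which again depends only on the class of~$p$, giving~$T_p=T_q$ whenever~$p\Pc q$, which is \equnm{con-2}.

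For the converse, I would assume \equnm{con-1} and \equnm{con-2} hold and construct a witness~$\Bc$ explicitly. Taking~$R=\Pc$, I define~$J$, $F$, and~$U$ as the ``induced'' data: $J_C=\sum_{p\in C}I_p$, the entry~$F(C,D)=\sum_{r\in D}E(p,a,r)$ summed over the appropriate letter contributions for any representative~$p\in C$ (well-defined precisely by \equnm{con-1}), and~$U_C=T_p$ for any~$p\in C$ (well-defined by \equnm{con-2}). The initial-function equation~$I\matmul\amPc=J$ holds by the definition of~$J$ and requires no hypothesis, since summing the~$I_p$ over a class is automatic; the final and transition equations then hold by the very definitions of~$U$ and~$F$ together with the two hypotheses. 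This produces a morphism~$\phi\colon\Ac\to\Bc$ whose map equivalence is~$\Pc$, so~$\Pc$ is a congruence.

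The main obstacle, and the only point demanding care, is the well-definedness of~$F$ in the converse: one must check that~$F(C,D)$ does not depend on the chosen representative~$p\in C$, which is exactly what \equnm{con-1} guarantees, letter by letter. A secondary point to handle cleanly is the passage between the matrix~$E$ (whose entries are sums of weighted labels) and the three-argument map~$E(p,a,r)$; I would state once at the outset that the matrix equation~$E\matmul\amPc=\amPc\matmul F$ decomposes over the alphabet, so that verifying it amounts to verifying the per-letter equalities, thereby matching the summation over~$r\in C$ in \equnm{con-1}. Everything else is routine entrywise matrix bookkeeping using the fact that multiplying on the right by~$\amPc$ sums columns within each class and multiplying on the left by~$\amPc$ (in the transfer-matrix equations) selects the class of a state.
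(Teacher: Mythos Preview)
Your proposal is correct and follows essentially the same route as the paper. The only cosmetic difference is in the converse: you construct the quotient~$\Bc$ by choosing a representative in each class and checking well-definedness, whereas the paper encodes the same choice via a \emph{selection matrix}~$\slPc$ (a row-monomial partial transpose of~$\amPc$) and writes the quotient as $\aut{I\matmul\amPc,\;\slPc\matmul E\matmul\amPc,\;\slPc\matmul T}$; the two formulations are the same computation in different notation.
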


\begin{proof}
\textsl{The condition is necessary.}
Let $\msp\Bc=\aut{J,F,U}\msp$ be a quotient of~$\Ac$,
$\msp\varphi\colon\Ac\rightarrow\Bc\msp$ a morphism, and~$\Pc$ its
map equivalence.
The multiplication of~$E$ \emph{on the right} by~$\amphi$ amounts to add
together the \emph{columns} of~$E$ whose indices are sent to a same
element by~$\phi$, that is, indices which are in a same class
of~$\Pc$.
The entries of the $Q\x\Pc$-matrix~$G=E\matmul\amphi$ are:
\begin{equation}
\forall p\in Q\quantvrg \forall C\in\Pc\quantsp
G_{p,C}=\sum_{r\in C}E_{p,r}
\eqpnt
\label{q.con-3}
\end{equation}
Every~$G_{p,C}$ is a linear combination of letters of~$A$:
$\msp G_{p,C}=\sum_{a\in A}  G(p,a,C)\xmd a\msp$
and~\equnm{con-3} may be rewritten as
\begin{equation}
\forall p\in Q\quantvrg \forall C\in\Pc\quantvrg \forall a\in A\quantsp
G(p,a,C) =\sum_{r\in C}E(p,a,r)
\eqpnt
\label{q.con-4}
\end{equation}

On the other hand, the multiplication of~$F$, \emph{on the left},
by~$\amphi$ yields a matrix whose \emph{rows} with indices in the
same class modulo~$\Pc$ are equal.
Hence the equality $\msp E\matmul\amphi=\amphi\matmul F\msp$
implies~\equnm{con-1}.
For the same reason, $\msp T=\amphi\matmul U\msp$
implies~\equnm{con-2}.

\medskip
\noindent
\textsl{The condition is sufficient.}
Let~$\Pc$ be an equivalence on~$Q$ such that~\equnm{con-1}
%\NeM{'be' added !!}%
and~\equnm{con-2} hold, $\amPc$ its amalgamation matrix, and~$\slPc$
a selection matrix.

The entries of~$T$ are indexed by~$Q$, and~$\slPc\matmul T$ is a
(column-) vector of dimension~$\Pc$ obtained by picking one entry
of~$T$ in each class~$C$ modulo~$\Pc$.
The vector~$\amPc\matmul(\slPc\matmul T)$, of dimension~$Q$, is
obtained by replicating, for every~$C$ in~$\Pc$, the entry
of~$\slPc\matmul T$ indexed by~$C$ for each~$p$ of~$Q$ in~$C$.
Since~\equnm{con-2} expresses that all entries of~$T$ indexed by
states in a same class  are equal,
$\msp T=\amPc\matmul(\slPc\matmul T)\msp$ holds.

In the same way, the rows (of dimension~$\Pc$) of~$E\matmul\amPc$ are
indexed by~$Q$, and~$\slPc\matmul E\matmul\amPc$ is a
$\Pc\x\Pc$-matrix obtained by picking one row of~$E\matmul\amPc$ in
each class~$C$ modulo~$\Pc$.
The matrix~$\amPc\matmul(\slPc\matmul E\matmul\amPc)$, of
dimension~$Q\x\Pc$, is obtained by replicating, for every~$C$
in~$\Pc$, the row of~$\slPc\matmul E\matmul\amPc$ indexed by~$C$ for
each~$p$ of~$Q$ in~$C$.
Since~\equnm{con-1} expresses that all rows of~$E\matmul\amPc$
indexed by  states in a same class  are equal,
$\msp E\matmul\amPc=\amPc\matmul(\slPc\matmul E\matmul\amPc)\msp$
holds.

\medskip
Equations~\equnm{con-1} and~\equnm{con-2} express
%\NeM{change!!}%
then that~$\Ac$ is conjugate by~$\amPc$ to
the automaton
\begin{equation}
\aut{I\matmul\amPc, \msp\slPc\matmul E\matmul\amPc, \msp\slPc\matmul T}
\eqvrg
\notag
\end{equation}
hence the map $\msp\phi\colon Q\rightarrow\Pc\msp$ is a morphism
and~$\Pc$, its map equivalence, is a congruence.
\end{proof}

\begin{remark}
\label{r.mor-bis}%
As mentioned in the introduction, morphisms have a close relationship
with \emph{bisimulations}.
One can even say it is the same notion expressed differently, for
those weighted automata for which bisimulations have been defined:
bisimulations for \emph{transition systems} which are Boolean automata
without initial and final states~\cite{Miln89,BensBenSh88}, for
\emph{probabilistic systems}~\cite{LarsSkou91,BaieHerm97}, linear
bisimulations when the weight semiring is a field~\cite{Bore09}, \etc
and when it is freed from so-called \emph{internal actions}.

Indeed an equivalence relation on the state set of an automaton~$\Ac$
is a bisimulation relation if and only if it is a congruence.
And a state of~$\Ac$ is in bisimulation with its image in any
quotient of~$\Ac$.
\end{remark}

%\iffundam
\begin{example}[continued]
%\else\begin{exacnt}{exa-Ao}
Let~$\Aco=\aut{I_{1}, E_{1}, T_{1}}$ of dimension~$Q_{1}=\{p,q,r\}$:
\begin{equation}
\Ac_{1} = \aut{\lignet{2}{1}{0},%
               \matricett{-a}{-b}{2\xmd b}%
				         {a}{-b}{a + 2\xmd b}%
						 {a}{a}{b},%
			   \vecteurt{0}{1}{1}}
\eqpnt
\notag
\end{equation}
It is easily seen that if we add the columns~$q$ and~$r$ of~$E_{1}$
we get a matrix whose rows~$q$ and~$r$ are equal;
moreover~${T_{1}}_{q}={T_{1}}_{r}$.
Hence $\{\{p\},\{q,r\}\}$ is a congruence of~$Q_{1}$.
\end{example}
%\else\end{exacnt}\fi

Since it is characteristic, we could have taken just as well the
statement of \propo{con} as the definition of a congruence or a
morphism.
And all the more in this paper where it is the property which is
uniformly called.
But in view of further references to this paper, we prefer to put
\defin{out-mor} in front, for its mathematical efficiency (\eg
\cite{LombSaka21}).

\subsection{Further properties of morphisms and congruences}

\begin{proposition}
\label{p.con-fut}
Let $\msp\Ac=\autiet\msp$ be a $\K$-automaton over~$\Ae$ of dimension~$Q$
and~$\xmd\Pc\xmd$ a {congruence} on~$Q$.
If two states~$p$ and~$q$ are equivalent modulo~$\Pc$, then
$\msp\futa{p}=\futa{q}\msp$.
\end{proposition}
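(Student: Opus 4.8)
The plan is to exploit the conjugacy $\Ac\ConjAuto{\amPc}\Bc$ that was established in the proof of \propo{con}, where $\Bc=\aut{I\matmul\amPc,\msp\slPc\matmul E\matmul\amPc,\msp\slPc\matmul T}$. Writing $F=\slPc\matmul E\matmul\amPc$ and $U=\slPc\matmul T$ for its transition matrix and final vector, the conjugacy equations read $E\matmul\amPc=\amPc\matmul F$ and $T=\amPc\matmul U$. Everything then reduces to the slogan that the future of a state equals the future of its image in the quotient, so states collapsed by the morphism must share their future.

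First I would push the star through the conjugacy. From $E\matmul\amPc=\amPc\matmul F$ a straightforward induction gives $E^{n}\matmul\amPc=\amPc\matmul F^{n}$ for every~$n$ --- this is essentially the telescoping computation already carried out between \defin{con-jug} and \propo{cnj-equ} --- and summing over~$n$ yields $E^{*}\matmul\amPc=\amPc\matmul F^{*}$.

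Combining this with $T=\amPc\matmul U$ and the expression \equnm{fut-1} for the future, I would compute, for any state~$p$,
\begin{equation*}
\futa{p}=I_{p}\matmul E^{*}\matmul T
        = I_{p}\matmul E^{*}\matmul\amPc\matmul U
        = \left(I_{p}\matmul\amPc\right)\matmul F^{*}\matmul U
\eqpnt
\end{equation*}
The final step is the observation that $I_{p}\matmul\amPc$ is just the row of~$\amPc$ indexed by~$p$, that is, the characteristic vector of the class of~$p$ modulo~$\Pc$. Hence if $p\Pc q$, then~$p$ and~$q$ lie in the same class, so $I_{p}\matmul\amPc=I_{q}\matmul\amPc$, and the two resulting expressions coincide: $\futa{p}=\futa{q}$.

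I do not expect a genuine obstacle here; the argument is short and wholly mechanical once the conjugacy of \propo{con} is invoked. The only point needing a little care is the bookkeeping that $I_{p}\matmul\amPc$ depends on~$p$ only through its class modulo~$\Pc$, which is precisely what makes equivalent states indistinguishable from the quotient's point of view.
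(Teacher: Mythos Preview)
Your proposal is correct and follows essentially the same approach as the paper's own proof: both use the conjugacy equations from \propo{con} to rewrite $\futa{p}=I_{p}\matmul E^{*}\matmul T$ as $I_{p}\matmul\amPc\matmul F^{*}\matmul U$ and then observe that $I_{p}\matmul\amPc=I_{q}\matmul\amPc$ whenever $p\Pc q$. Your write-up is in fact slightly more explicit about the intermediate step $E^{*}\matmul\amPc=\amPc\matmul F^{*}$, which the paper compresses into ``with the same computation as above''.
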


\begin{proof}
With the notation of \secti{wei-aut},
$\msp\futa{p} = I_{p}\matmul E^{*}\matmul T\msp$ and with the same
computation as above,
$\msp\futa{p} = I_{p}\matmul\amPc\matmul F^{*}\matmul U\msp$.
Accordingly,
$\msp\futa{q} = I_{q}\matmul\amPc\matmul F^{*}\matmul U\msp$.
And if~$p$ and~$q$ are equivalent modulo~$\Pc$, then
$\msp I_{p}\matmul\amPc = I_{q}\matmul\amPc \msp$.
\end{proof}

\begin{remark}
\label{r.mor-dir}%
\defin{out-mor} gives a notion of morphism that is \emph{directed} as
the one of conjugacy is: $E$ is multiplied \emph{on the right} by~$X$
in \equnm{con-aut}.
This is even more obvious with the statements of \propo{con} or of
\propo{con-fut}.

Thus morphisms could, or should, be called \emph{Out-morphisms} as it
refers to \emph{out-going} transitions.
The same map~$\phi$ would be an \emph{In-morphism} if~$\Bc$ is
conjugate to~$\Ac$ by~$\Trsp{\amphi}$ (the transpose of~$\amphi$).

In this work, we deal with Out-morphisms only, which we simply call
morphisms.
All statements can be dualised and transformed accordingly for
In-morphisms.
\end{remark}

\begin{remark}
\label{r.mor-boo}%
\textsl{The case of Boolean automata}.\e
Let $\msp\Ac=\aut{I,E,T}\msp$ and $\msp\Bc=\aut{J,F,U}\msp$ be two
\emph{Boolean} automata over~$\Ae$ of dimension~$Q$ and~$R$
respectively and $\msp\varphi\colon Q\rightarrow R \msp$ a morphism
from~$\Ac$ to~$\Bc$.
In this special case, no weight is really involved: a transition
exists or not, a state is initial or not, final or not.
Moreover, $I$~and~$J$ can also be seen as subsets of~$Q$, $J$~and~$U$
as subsets of~$R$, $E$~can also be seen as a subset of~$Q\x A\x Q$,
$F$~as a subset of~$R\x A\x R$.
\defin{out-mor} can then be  rewritten in the following way:
$\msp I\matmul\amphi = J\msp$ and
$\msp T = \amphi\matmul U\msp$ translate into
\begin{equation}
\text{(i)} \ee \phi(I) = J
\eee\ee \text{and} \eee\ee
\text{(ii)} \ee  T = \phi^{-1}(U)
\EqVrgInt
\notag
\end{equation}
whereas $\msp E\matmul\amphi = \amphi\matmul F\msp$ yields
\begin{alignat}{2}
\text{(iii)} \e & \forall a\in A\quantvrg \forall p,q\in Q\quantsp
&(p,a,q)\in E \e&\Longrightarrow\e (\phi(p),a,\phi(q))\in F
\ee\text{and}\e
\notag\\
\text{(iv)} \e & \forall a\in A\quantvrg
  \forall p\in Q\quantvrg\forall s\in R\quantsp
&(\phi(p),a,s)\in F \e&\Longrightarrow\e
\exists q\in\phi^{-1}(s)\quantsmsp (p,a,q)\in E
\eqpnt
\notag
\end{alignat}

Compared with previous terminology, morphisms of Boolean automata as
we have just defined them are what we have called \emph{locally
surjective Out-morphisms} in~\cite{Saka09} and subsequent works.
% \NeM{reference to be added}%
In the case of \emph{deterministic} Boolean automata, our definition
coincide with the classical notion of morphisms of automata, when it
is used.
\end{remark}

\begin{remark}
\label{r.con-agm}%
\textsl{The case of augmented automata}.\e
If $\msp\varphi\colon\Ac\rightarrow\Bc\msp$ is a morphism, then
$\msp\Augm{\varphi}\colon\Augm{\Ac}\rightarrow\Augm{\Bc}\msp$ is also
a morphism, where $\AgmA=\auta{Q,i,\Augm{E},t}$,
$\Augm{\Bc}=\auta{R,j,\Augm{F},u}$, and
$\Augm{\varphi}(i)=j$, $\Augm{\varphi}(t)=u$, and
$\Augm{\varphi}(q)=\phi(q)$ for all~$q$ in~$Q$.
This amounts to say that~$\amlg{\Augm{\varphi}}$ has the form
\begin{equation}
\amlg{\Augm{\varphi}} = \matblct{\amphi}
\eqvrg
\label{q.aug-mor}
% \notag
\end{equation}
a matrix which we rather denote by~$\Augm{\amphi}$.

\medskip
Conversely, if $\AgmA=\auta{Q,i,\Augm{E},t}$ is the augmented
automaton of $\Ac=\autiet$, the amalgamation matrix of \emph{any
morphism} of~$\AgmA$ is of the form~\equnm{aug-mor} and corresponds
to a morphism of~$\Ac$.
In particular, if $\Ac=\auta{Q,i,E,t}$ is an augmented
$\K$-automaton, an equivalence~$\Pc$ on~$Q$ is a \emph{congruence
on~$\Ac$} if and only if
\begin{gather}
\{i\}\in\Pc\EqVrgInt\e
\{t\}\in\Pc\EqVrgInt\e
\e\text{and}
\label{q.aug-con-1}
\\
\forall p,q\quantsmsp
p\Pc q \e\Longrightarrow\e
\forall a\in \Am\quantvrg
\forall C\in\Pc\quantsmsp
\sum_{r\in C}E(p,a,r)=\sum_{r\in C}E(q,a,r)
\eqpnt
\e
\label{q.aug-con-2}
\end{gather}

The virtue of adding the marker~$\dol$ and considering augmented
rather than arbitrary automata is that the sole
equation~\equnm{aug-con-2} expresses conditions described by
both equations~\equnm{con-1} and~\equnm{con-2}.
And it is the way that the property of being a congruence will be
tested by the algorithms described in the sequel.
\end{remark}

\subsection{Minimal quotients}

With \defin{out-mor}, we have defined the quotients of a
$\K$-automaton~$\Ac$.
The following proposition states that every $\K$-automaton has a
\emph{minimal quotient}.

\begin{theorem}
\label{t.min-quo}%
Among all quotients of a $\K$-automaton~$\Ac$, there exists one,
unique up to isomorphism, which has a minimal number of states and
which is a quotient of every quotient of~$\Ac$.
\end{theorem}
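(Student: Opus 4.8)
The plan is to realize the minimal quotient as the quotient by the \emph{coarsest} congruence on $Q$, which I will obtain by showing that congruences are closed under an appropriate join operation. First I would establish that the set of congruences on $\Ac$, ordered by refinement (where $\Pc \preceq \Qc$ means $\Pc$ refines $\Qc$), forms a structure in which a \emph{coarsest} element exists. The natural candidate is the join of \emph{all} congruences on $Q$; since the discrete partition (every class a singleton) is trivially a congruence, the set is nonempty, and since $Q$ is finite the poset of partitions is finite, so a maximal (coarsest) congruence will exist once I know the congruences are closed under the relevant join. Using the characterization in \propo{con}, a partition $\Pc$ is a congruence precisely when \equnm{con-1} and \equnm{con-2} hold; these are \emph{linear} conditions on the class-sums of $E$ and on $T$, so I expect the transitive closure of a union of two congruences to again satisfy them.

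The key steps, in order, are as follows. First I would define a relation on $Q$ by declaring $p \equiv q$ whenever $\futa{p}=\futa{q}$ is \emph{not} quite the right invariant — rather, I want the coarsest congruence, so I would instead take $\Pc_{\min}$ to be the join of all congruences and verify directly, via \propo{con}, that $\Pc_{\min}$ is itself a congruence. Second, letting $\varphi_{\min}\colon \Ac \to \Bc_{\min}$ be the associated morphism supplied by \propo{con}, I would show $\Bc_{\min}$ has the universal property: for any quotient $\psi\colon \Ac \to \Cc$ with congruence $\Qc$, the relation $\Qc$ refines $\Pc_{\min}$ (because $\Qc$ is one of the congruences whose join is $\Pc_{\min}$), and a refinement of congruences induces a morphism $\Bc_{\min} \to \Cc$ at the quotient level. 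This factorization $\Ac \to \Cc \to \Bc_{\min}$ — or rather $\Ac \to \Bc_{\min}$ factoring through $\Cc$ — is what makes $\Bc_{\min}$ a quotient of every quotient. Third, minimality of the number of states follows since the coarsest congruence has the fewest classes, and uniqueness up to isomorphism follows from the universal property in the standard way: two automata each a quotient of the other via morphisms (which are surjective) and both having the same finite dimension must be isomorphic.

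The main obstacle I anticipate is verifying that the join of two congruences is again a congruence, \ie that the closure conditions \equnm{con-1}--\equnm{con-2} are preserved under taking the partition generated by the union of two congruence partitions. The subtlety is that if $p\Pc_1 q$ and $q \Pc_2 s$, then $p$ and $s$ land in the same class of the join, and I must show the class-sum equalities propagate along such chains; this requires that when two classes $C_1, C_2$ of an intermediate partition merge, the sums $\sum_{r\in C_1 \cup C_2} E(p,a,r)$ still agree across equivalent states. I expect this to follow because equivalence modulo a congruence forces equality of \emph{each} class-sum (over the finer classes), and these equalities add up over a coarser class; but the cleanest route is to use \propo{con-fut}, observing that $\futa{p}=\futa{q}$ is an invariant respected by every congruence, and to argue that the join respects it too, thereby reducing the closure verification to a finite, purely combinatorial check that merged classes preserve the defining linear equalities.
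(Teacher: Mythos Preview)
Your approach is essentially the paper's: it too proves the existence of a coarsest congruence by showing that the join $\omega=\varphi\vee\psi$ of two congruences is again a congruence, via exactly the chain argument you describe (states $p=q_{0},q_{1},\ldots,q_{2n}=r$ alternately $\varphi$- and $\psi$-equivalent, so the row equalities in $E\matmul\amome$ propagate along the chain). One caveat: your closing suggestion to route through \propo{con-fut} is a dead end, since equality of futures is only a \emph{necessary} condition for congruence and cannot establish that the join satisfies \equnm{con-1}; but your earlier, correct observation --- that the class-sum equalities over the finer classes of $\varphi$ and $\psi$ add up to give the required equalities over the coarser classes of $\omega$ --- is precisely the paper's argument, phrased there in the matrix form $E\matmul\amome=E\matmul\amphi\matmul\amphip=E\matmul\ampsi\matmul\ampsip$.
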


Once again, it is convenient to express this property in terms of
congruences, which focuses on the automaton itself rather than on its
images.
It starts with the definition of an order on equivalences, or
partitions, on a set~$Q$.

An equivalence~$\Rc$ is \emph{coarser} than an equivalence~$\Pc$ if,
for every~$C$ in~$\Pc$, there exists~$D$ in~$\Rc$ such
that~$C$ is a subset of~$D$.
It follows that every class of~$\Rc$ is the union of classes
of~$\Pc$.
The equivalences on a set~$Q$, ordered by this inclusion relation,
form a lattice, with the identity --- where every class is a
singleton --- at the top and the universal relation --- with only one
class that contains all elements of~$Q$ --- at the bottom.
The following is then a statement equivalent to \theor{min-quo}.

\begin{theorem}
\label{t.coa-con}%
Every $\K$-automaton has a unique coarsest congruence.
\end{theorem}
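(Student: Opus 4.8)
The plan is to argue entirely on the side of congruences, as the paper invites, and to produce the coarsest congruence as the equivalence generated by the union of \emph{all} congruences on $Q$ (the transitive closure of that union). First I would record that the family of congruences is nonempty and sits inside the lattice of equivalences on $Q$: the identity equivalence, whose classes are all singletons, satisfies \equnm{con-1} and \equnm{con-2} trivially. Note that one cannot simply take the ``same-future'' relation $p\Rc q\iff\futa{p}=\futa{q}$: by \propo{con-fut} every congruence is finer than $\Rc$, so $\Rc$ is an upper bound, but $\Rc$ is in general not itself a congruence (this is exactly the paper's point that the minimal quotient depends on the automaton, not only on its behaviour). Hence the upper bound has to be produced from inside the family, and everything reduces to one closure property: the transitive closure of the union of two congruences is again a congruence. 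Granting this, the transitive closure of the union of all congruences is a congruence, it is coarser than each of them, and being the maximum for relation-inclusion it is the unique coarsest one.

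The heart of the argument is thus the closure under join, and within it the preservation of the transition condition \equnm{con-1}. Let $\Pc_{1}$ and $\Pc_{2}$ be congruences and let $\Pc$ be the transitive closure of $\Pc_{1}\cup\Pc_{2}$. Condition \equnm{con-2} is immediate: if $p\Pc q$ there is a finite chain from $p$ to $q$ whose consecutive terms are related by $\Pc_{1}$ or $\Pc_{2}$, and each link leaves the value of $T$ unchanged, so $T_{p}=T_{q}$. For \equnm{con-1} the key observation is that $\Pc$ is coarser than each $\Pc_{i}$, so every class $C$ of $\Pc$ is a union of classes of $\Pc_{1}$ (and of $\Pc_{2}$). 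Consequently, for a single link $x\Pc_{1}y$ and any letter $a$, summing the congruence equality for $\Pc_{1}$ over the $\Pc_{1}$-classes $D$ contained in $C$ gives $\sum_{r\in C}E(x,a,r)=\sum_{D\subseteq C}\sum_{r\in D}E(x,a,r)=\sum_{D\subseteq C}\sum_{r\in D}E(y,a,r)=\sum_{r\in C}E(y,a,r)$, and the same holds for $\Pc_{2}$-links. Therefore the quantity $\sum_{r\in C}E(\cdot,a,r)$ is constant along the whole chain joining $p$ to $q$, which is precisely \equnm{con-1} for $\Pc$.

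To finish I would pass from pairwise joins to the join of the entire family, either by iterating and invoking the finiteness of the lattice of equivalences on the finite set $Q$, or by running the chain argument above directly for an arbitrary union (a witnessing chain is finite, and each of its links lies in some congruence of the family, whose classes refine those of the join). Either way, the resulting equivalence $\Pc_{\max}$ satisfies \equnm{con-1} and \equnm{con-2}, hence is a congruence by \propo{con}; it contains every congruence as a relation and is therefore coarser than each, and as the maximum element it is unique. The single delicate point is the verification of \equnm{con-1} under join, and the reason it goes through over an arbitrary semiring --- where no subtraction is available --- is that the computation uses only the grouping of sub-classes and addition, never any cancellation.
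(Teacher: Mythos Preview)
Your proof is correct and follows essentially the same line as the paper's: both start from the identity congruence, reduce to showing that the join of two congruences is a congruence, and verify this via a finite chain alternating between $\Pc_{1}$- and $\Pc_{2}$-links, using that each class of the join decomposes as a union of classes of either $\Pc_{i}$. The only difference is presentational: the paper carries the chain argument in matrix form --- writing $E\mmul\amome=E\mmul\amphi\mmul\amphip=E\mmul\ampsi\mmul\ampsip$ and observing that equal rows of $E\mmul\amphi$ stay equal after right-multiplication by~$\amphip$ --- whereas you unfold this into the elementwise sums of \propo{con}; the two computations are literally the same.
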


Indeed, the equivalence map of the morphism onto the minimal quotient
is the coarsest congruence and the coarsest congruence is the
equivalence map of the morphism onto the minimal quotient.

\begin{remark}
\label{r.min-quo-1}%computed from and

It is important to stress once again that the minimal quotient
of~$\Ac$ is associated with~$\Ac$, \emph{and not} with the series
realised by~$\Ac$.
\emph{It is not} necessarily \emph{the smallest} (in terms of number
of states) $\K$-automaton that realises the series, \emph{it is not} a
canonical automaton associated with it.

For instance, the minimal quotient of a Boolean automaton is not
necessary the smallest automaton for the accepted language.
On the other hand, the minimal quotient of a deterministic Boolean
automaton is the \emph{minimal (deterministic) automaton} of the
accepted language (which may well be larger than another Boolean
automaton accepting the same language) that is, the notion we have
defined for all (possibly weighted) automata coincides with the
classical one in the case of deterministic Boolean automata.
\end{remark}

\begin{proof}[Proof of \theor{min-quo}]

Let $\msp\Ac=\aut{I,E,T}\msp$ be a $\K$-automaton of dimension~$Q$.
The \emph{proof of the existence} of a coarsest congruence on~$\Ac$
goes downward, so to speak, that is, we start from the identity on~$Q$
which is a congruence.
Let
$\msp\varphi\colon Q\rightarrow R\msp$ and
$\msp\psi\colon Q\rightarrow P\msp$
be two morphisms.
In order to prove the existence of a minimal quotient (or of a
coarsest congruence) it suffices to verify that the lower bound of the
map equivalences of~$\phi$ and~$\psi$ is a congruence.
Let
$\msp\phi'\colon R\rightarrow S\msp$ and
$\msp\psi'\colon P\rightarrow S\msp$
such that
$\msp \omega = \phi\vee\psi = \phi'\compos\phi = \psi'\compos\psi \msp$.
Hence
\begin{equation}
E\mmul\amome = E\mmul\amphi\mmul\amphip = E\mmul\ampsi\mmul\ampsip
\eqpnt
\label{q.coa-con}
% \notag
\end{equation}

By definition, two states~$p$ and~$r$ of~$Q$ are equivalent
modulo~$\omega$, which we write $\msp p\,\omega\, r\msp$, if and only
if there exists a sequence
$\msp p=q_{0},q_{1}, \ldots, q_{2n}=r\msp$
of states of~$Q$ such that
$\msp q_{2i}\,\phi\, q_{2i+1}\msp$ and
$\msp q_{2i+}\,\psi\, q_{2i+2}\msp$ for
$\msp 0\leqslant i \leqslant n-1\msp$.
Rows of indices~$q_{2i}$ and~$q_{2i+1}$ of~$E\mmul\amphi$, and then
of~$E\mmul\amphi\mmul\amphip$, are equal,
rows of indices~$q_{2i+1}$ and~$q_{2i+2}$ of~$E\mmul\ampsi$, and then
of~$E\mmul\ampsi\mmul\ampsip$, are equal,
hence rows of indices~$p$ and~$r$ of~$E\mmul\amome$ are equal.
For the same reason, $\msp T_{p}=T_{r}\msp$ and~$\omega$ is a
congruence.
\end{proof}

\begin{remark}
\label{r.bis-quo}%
In the case of Boolean automata also, the characterisation of
bisimulation follows from \remar{mor-bis}.
\begin{proposition}
\label{p.bis-quo}%
Two $\K$-automata are bisimilar if and only if their minimal quotients
are isomorphic.
\EoP
\end{proposition}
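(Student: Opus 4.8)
The plan is to factor the equivalence through the notion of a \emph{common quotient} and then to lean on the two results already in place: the characterisation of bisimulation as congruence (\remar{mor-bis}), and the absorption property of the minimal quotient (\theor{min-quo}). Concretely, I would prove the chain: $\Ac$ and $\Bc$ are bisimilar $\iff$ they admit a common quotient (an automaton $\mathcal{C}$ together with morphisms $\Ac\rightarrow\mathcal{C}$ and $\Bc\rightarrow\mathcal{C}$) $\iff$ their minimal quotients are isomorphic.

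The second equivalence is the clean, purely algebraic core, and I would isolate it as a lemma: \emph{if $\mathcal{C}$ is a quotient of $\Ac$, then $\Ac$ and $\mathcal{C}$ have isomorphic minimal quotients.} Writing $\mathcal{M}$ for the minimal quotient of $\Ac$, it is a quotient of $\mathcal{C}$ (since $\mathcal{C}$ is a quotient of $\Ac$ and, by \theor{min-quo}, $\mathcal{M}$ is a quotient of \emph{every} quotient of $\Ac$); moreover every quotient of $\mathcal{C}$ is, by composition of morphisms, a quotient of $\Ac$, so $\mathcal{M}$ is a quotient of every quotient of $\mathcal{C}$ as well. Hence $\mathcal{M}$ and the minimal quotient of $\mathcal{C}$ are each a quotient of the other; having the same number of states, they are isomorphic, so $\mathcal{M}$ \emph{is} the minimal quotient of $\mathcal{C}$ by the uniqueness in \theor{min-quo}. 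Granting the lemma, if $\Ac$ and $\Bc$ share a common quotient $\mathcal{C}$, their minimal quotients both coincide with that of $\mathcal{C}$, hence are isomorphic; conversely, if the minimal quotients are isomorphic, the minimal quotient itself is a common quotient, via the two minimisation morphisms composed with the isomorphism.

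For the first equivalence I would use \remar{mor-bis}. For $(\Leftarrow)$, if $\Ac\rightarrow\mathcal{C}\leftarrow\Bc$ is a common quotient, then by \remar{mor-bis} each state of $\Ac$ is in bisimulation with its image in $\mathcal{C}$, and likewise for $\Bc$; since both morphisms are surjective, composing these two bisimulations across $\mathcal{C}$ relates every state of $\Ac$ to a state of $\Bc$ and conversely, which is exactly a bisimulation between $\Ac$ and $\Bc$. For $(\Rightarrow)$, a bisimulation between $\Ac$ and $\Bc$ is, again by \remar{mor-bis}, a congruence on their (augmented) disjoint sum whose classes meet both state sets; its quotient is then an automaton onto which $\Ac$ and $\Bc$ both map surjectively, i.e.\ a common quotient.

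The main obstacle is this last paragraph: giving a definition of bisimilarity \emph{between two automata} that is faithful to \remar{mor-bis} (where a bisimulation is a congruence on a single automaton) and checking that the induced projections are genuine morphisms in the sense of \defin{out-mor}. The delicate point is the initial and final data, since taking a naive disjoint sum makes the initial functions add, so that the projection $\Ac\rightarrow\mathcal{C}$ would fail the condition $I\matmul\amphi=J$. The clean fix I would adopt is to work with augmented automata (\remar{con-agm}): form the sum of $\AgmA$ and $\Augm{\Bc}$ and take bisimilarity to mean a congruence that merges the two initial states into one class and the two final states into one class while leaving the transition condition~\equnm{aug-con-2} intact. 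Because in augmented automata the initial and final weights are carried by the marker transitions, this single condition simultaneously enforces the initial, transition, and final requirements, so the quotient of the augmented sum is a bona fide common quotient whose minimisation yields the desired isomorphism.
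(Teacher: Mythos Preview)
The paper does not actually give a proof of this proposition: the \EoP\ symbol immediately after the statement means it is recorded as a direct consequence of \remar{mor-bis} and \theor{min-quo}, with no argument written out. So there is no ``paper's own proof'' to compare against; your proposal supplies what the paper omits.

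Your route is correct and is the natural unfolding of what the paper gestures at. The lemma that a quotient of~$\Ac$ has the same minimal quotient as~$\Ac$ is exactly the right intermediate step and follows cleanly from \theor{min-quo} as you argue. The factorisation through a common quotient is the standard bridge between ``bisimilar'' and ``same minimal quotient''. You also put your finger on the one genuine soft spot: the paper never defines bisimilarity \emph{between} two $\K$-automata, only congruence on a single one, and your remedy via a congruence on the augmented disjoint sum (with $\{i_{\Ac},i_{\Bc}\}$ and $\{t_{\Ac},t_{\Bc}\}$ forced into single classes and every class meeting both state sets) is a sound way to make that precise. One small point to make explicit when you write it up: surjectivity of the two restricted maps $\AgmA\to\mathcal{C}$ and $\Augm{\Bc}\to\mathcal{C}$ is exactly your ``every class meets both state sets'' hypothesis, and the transition condition~\equnm{aug-con-2} for the restricted map holds because the disjoint sum has no cross-transitions, so the row sums over a class are unchanged when restricted to one summand.
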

\end{remark}

The minimal quotient not only exists but is also effectively, and
efficiently computable,
%%\NeM{2 typos !!}%
as we see in the next sections.

\section{The computation of the minimal quotient}
\label{s.com-tru}%

To describe the algorithms computing the minimal quotient of a
weighted automaton, it is convenient to consider the augmented
automaton.
From now on, every automaton is therefore augmented, the alphabet
is~$\Adol$, and the states which are different from the initial or
final states are called \emph{true states}.

\subsection{Refinement algorithms and signatures}

The principle of algorithms which are studied in this paper is
partition refinement.
The algorithms start with the coarsest equivalence (where all true
states are gathered in one class), and split classes which are
inconsistent with the constraints of a congruence.
In order to split classes, we use a criterion on states of~$\Ac$, which
we call \emph{signature}, and which, given a partition, tells if two
states in a same class should be separated in a congruence of~$\Ac$.
This notion of signatures
has been used in~\cite{BealCroc08} for the minimisation of
\emph{incomplete} deterministic Boolean automata, the `first' example for which the classical
minimisation algorithms for complete deterministic Boolean automata have to be adapted.

The signature is first defined with respect to a class.
\begin{definition}
The \emph{signature} of a state~$p$ of a
$\K$-automaton~$\Ac=\aut{Q,i,E,t}$ with respect to a subset~$D$ of~$Q$
is \emph{the map}~$\sigf{p,D}$ from~$\Adol$ to~$\K$, defined by:
\begin{equation}
\sig{p,D}{a} = \sum_{q\in D}E(p,a,q)
\eqpnt
\notag
\end{equation}
\end{definition}

For every state~$p$ and every subset~$D$, the domain of~$\sigf{p,D}$
is the set of labels of transitions from~$p$ to some state of~$D$.
In particular, if a letter~$a$ does not belong to the domain
of~$\sigf{p,D}$, then~$\sig{p,D}{a}=0$.
When we want to explicitly describe~$\sigf{p,D}$, we write it as a set
of elements of the form $a\mapsto k$, where~$a$ is in the domain
of~$\sigf{p,D}$ and~$\sig{p,D}{a}=k$.

\medskip
Two states~$p$ and~$q$ cannot be equivalent with respect to some
congruence~$\Pc$ of~$Q$ if their signatures differ
for some class~$D$ of~$Q$.
In order to compare states, it can be useful to consider the
\emph{global} signature which is the aggregation of signatures with
respect to all classes of the partition.
That is, for a given partition~$\Pc$ and for every state~$p$,
\begin{equation}
\glsigg{p}=
\Defi{(D,a)\mapsto k}{ D\text{ class of }\Pc, a\mapsto k\in\sigf{p,D}}
\eqpnt
\notag
\end{equation}

\subsection{The Protoalgorithm}
% As already mentioned,
The computation of the coarsest congruence of
an automaton~$\Ac=\aut{Q,i,E,t}$ \emph{goes upward}.
It starts with
$\msp\Pc_{0}=\big\{\{i\},Q,\{t\}\big\}\msp$,
the coarsest possible equivalence.
Every step of the algorithm splits some classes of the
current partition, yielding an equivalence which is \emph{thiner},
\emph{higher} in the lattice of equivalences
of~$Q\cup\{i,t\}$.

\medskip
It follows from Proposition~\ref{p.con} that a partition~$\Pc$ is a congruence
if and only if
\begin{equation}
\forall C\in\Pc\quantvrg \forall p,q\in C\quantvrg
\glsigg{p} = \glsigg{q}
\eqpnt
\eee\eee
\notag
\end{equation}
Thus~$\Pc$ is a congruence if and only if, for every pair~$(C,D)$ of
classes of~$\Pc$, all states~$p$ in~$C$ have the same signature
with respect to~$D$.
A pair~$(C,D)$ for which this property is not satisfied is called a
\emph{splitting pair}.
The equivalence on~$C$ induced by the signature with respect to~$D$,
called \emph{the split of~$C$ by~$D$} and denoted by~$\spltf{C,D}$,
can be computed:
\begin{equation}
\forall p,q\in C\quantsp
\splt{C,D}{p} = \splt{C,D}{q}
\e\Longleftrightarrow\e
\sigf{p,D} = \sigf{q,D}
\eqpnt
\ee
\notag
\end{equation}
The split of class~$C$ by class~$D$ of~$\Pc$ leads to a new
equivalence on~$Q$: $\Pc\splteq\spltf{C,D}$,
and the protoalgorithm runs as follows:

\begin{tabbing}
 xxxxxx \= xxx \= xxx \= xxx \= xxx \= xxx \=\kill
\> $\Pc:=\Pc_0$\\
\> \keyw{while} there exists a splitting pair~$(C,D)$ in~$\Pc$\\
\> \> $\Pc:=\Pc\splteq\spltf{C,D}$
\end{tabbing}

When there are no more splitting pairs, the current equivalence is a
congruence.

\begin{proposition}
\label{p.coa-con}
At every iteration of the protoalgorithm, the equivalence~$\Pc$ is
coarser than or equal to the coarsest congruence.
\end{proposition}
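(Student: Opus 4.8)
The plan is to prove the statement by induction on the number of iterations of the protoalgorithm, with the invariant being exactly the claim: the current equivalence~$\Pc$ is coarser than or equal to the coarsest congruence, which by \theor{coa-con} exists and which I will denote~$\Pc_{\infty}$. The base case concerns the initial partition $\Pc_{0}=\big\{\{i\},Q,\{t\}\big\}$. Here I would observe that, by \remar{con-agm}, any congruence of an augmented automaton must place~$i$ and~$t$ in singleton classes (condition~\equnm{aug-con-1}); hence~$\Pc_{\infty}$ refines the partition that isolates~$i$ and~$t$ and lumps all true states together, which is precisely~$\Pc_{0}$. Therefore~$\Pc_{0}$ is coarser than or equal to~$\Pc_{\infty}$, establishing the base case.

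For the inductive step, I would assume that at the current iteration~$\Pc$ is coarser than or equal to~$\Pc_{\infty}$, and show that the refined equivalence $\Pc\splteq\spltf{C,D}$ obtained by splitting a class~$C$ by a class~$D$ is still coarser than or equal to~$\Pc_{\infty}$. The key point is that splitting never separates two states that~$\Pc_{\infty}$ keeps together. Concretely, suppose~$p$ and~$q$ are two states in~$C$ with $p\,\Pc_{\infty}\,q$; I must check that the split does not distinguish them, \ie that $\sigf{p,D}=\sigf{q,D}$. Since~$\Pc$ is coarser than~$\Pc_{\infty}$ by the induction hypothesis, the class~$D$ of~$\Pc$ is a union of classes of~$\Pc_{\infty}$. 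As~$\Pc_{\infty}$ is a congruence, Proposition~\ref{p.con} (in its augmented form~\equnm{aug-con-2}) gives, for each letter~$a$ and each class~$D'$ of~$\Pc_{\infty}$ contained in~$D$, the equality $\sum_{r\in D'}E(p,a,r)=\sum_{r\in D'}E(q,a,r)$. Summing these equalities over all classes~$D'$ of~$\Pc_{\infty}$ that make up~$D$ yields $\sum_{r\in D}E(p,a,r)=\sum_{r\in D}E(q,a,r)$ for every~$a$, that is, $\sigf{p,D}=\sigf{q,D}$. Hence~$p$ and~$q$ land in the same block of~$\spltf{C,D}$, and thus in the same class of $\Pc\splteq\spltf{C,D}$.

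This shows that every pair identified by~$\Pc_{\infty}$ is still identified after the split, so $\Pc\splteq\spltf{C,D}$ remains coarser than or equal to~$\Pc_{\infty}$, completing the induction. The statement then holds at every iteration. I expect the only delicate point to be the bookkeeping in the inductive step: one must use the induction hypothesis precisely to guarantee that~$D$ decomposes into whole classes of the coarsest congruence, which is what allows the additivity of the signature over the subclasses to be invoked. Everything else is a routine verification, and the conclusion that the protoalgorithm terminates at the coarsest congruence (rather than at some strictly finer one) follows by combining this proposition with the termination criterion --- no splitting pairs remain --- which by Proposition~\ref{p.con} means the output is itself a congruence, hence coarser than or equal to~$\Pc_{\infty}$ yet also a congruence, forcing equality.
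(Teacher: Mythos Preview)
Your proof is correct and follows essentially the same approach as the paper's: induction on iterations, with the inductive step hinging on decomposing the splitter~$D$ into classes of the coarsest congruence and summing the resulting signature equalities. The paper phrases the inductive step by contradiction while you argue directly, and your base case is slightly more explicit about why~$\{i\}$ and~$\{t\}$ must be singletons in any congruence, but the substance is identical.
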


\begin{proof}
In the initial partition~$\Pc_0$, all true states are in the same
class, thus~$\Pc_0$ is coarser than or equal to any congruence.

\medskip
Let~$\Pc$ be a partition computed at one iteration; we assume that
$\Pc$ is coarser than or equal to the coarsest congruence~$\Cc$, and we
consider~$\Pc'$ computed at the next iteration.
There exists a splitting pair~$(C,D)$, such that
$\Pc'=\Pc\splteq\spltf{C,D}$.
If $\Pc'$ is not coarser than or equal to~$\Cc$, then there exists~$C_1$ and $C_2$
in $\spltf{C,D}$, $p_1$ in~$C_1$ and $p_2$ in~$C_2$ which belongs to
the same class in~$\Cc$.
Since~$\Pc$ is coarser than or equal to~$\Cc$, $D$ is the union of some
classes~$(D_i)$ in~$\Cc$; $p_1$ and $p_2$ are $\Cc$-equivalent, hence,
for every~$i$,
\begin{equation}
\sigf{p_1,D_i}=\sigf{p_2,D_i}
\eqpnt
\end{equation}
Therefore, $\sigf{p_1,D}=\sigf{p_2,D}$, which is in contradiction with
the fact that~$p_1$ and $p_2$ are in different classes after the split
of $C$ with respect to~$D$.
Thus $\Pc'$ is coarser than or equal to~$\Cc$ and, by induction, the proposition
holds.
\end{proof}

\begin{corollary}
\label{c.coa-con}
At the end of the protoalgorithm, the equivalence~$\Pc$ is
equal to the coarsest congruence. %%%\NeM{\EoP addedd !!}%
\EoP
\end{corollary}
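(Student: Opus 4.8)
The plan is to derive the corollary from \propo{coa-con} together with two observations: that the protoalgorithm terminates, and that its stopping condition characterises congruences exactly.

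First I would check termination. Each iteration replaces~$\Pc$ by~$\Pc\splteq\spltf{C,D}$ for a \emph{splitting} pair~$(C,D)$; by the very definition of a splitting pair the signatures with respect to~$D$ are not constant over~$C$, so~$\spltf{C,D}$ breaks~$C$ into at least two nonempty parts and the resulting partition is strictly finer than~$\Pc$. Hence the number of classes strictly increases at every iteration. As the state set is finite, this number is bounded (by~$|Q|+2$, since~$\{i\}$ and~$\{t\}$ remain singleton classes throughout), so only finitely many iterations can take place and the loop halts.

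Next I would identify the partition at termination. When the loop exits there is no splitting pair, which by the characterisation stated just before the protoalgorithm (itself a reformulation of \propo{con}) means that for every pair of classes~$(C,D)$ all states of~$C$ share the same signature with respect to~$D$; equivalently, all states in a common class have the same global signature~$\glsigg{p}$. This is precisely the condition for~$\Pc$ to be a congruence, so the final~$\Pc$ is a congruence.

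Finally I would combine the two directions of the order. By \propo{coa-con} the final~$\Pc$ is coarser than or equal to the coarsest congruence~$\Cc$, whose existence is guaranteed by \theor{coa-con}. Conversely, since~$\Cc$ is coarsest among all congruences and the final~$\Pc$ is itself a congruence, $\Cc$ is coarser than or equal to~$\Pc$. The coarser-than relation is a partial order on the lattice of equivalences of a finite set, so antisymmetry forces~$\Pc=\Cc$, which is the claim. The only step needing any care is termination; everything else is immediate once \propo{coa-con} is in hand, so there is no genuine obstacle here.
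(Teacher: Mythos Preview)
Your argument is correct and is exactly the reasoning the paper leaves implicit: in the paper the corollary carries only the end-of-proof symbol and no written proof, as it is meant to follow immediately from \propo{coa-con} together with the sentence just before it (``When there are no more splitting pairs, the current equivalence is a congruence''). You have simply spelled out the two missing observations --- termination and the fact that the stopping condition yields a congruence --- and then applied antisymmetry, which is precisely what the paper intends the reader to supply.
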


The procedure described by the protoalgorithm is not a true algorithm,
in the sense that, in particular, it does not tell how to find a
splitting pair, nor how to implement
the function~$\spltop$ to make it efficient.
The main difference between the two algorithms described
in the sequel is the selection of the splitting pairs.
\begin{itemize}
\item the first algorithm
iterates over classes~$C$ and considers in the same iteration all the
pairs~$(C,D)$ where~$D$ is any class which contains some
\emph{successors} of states of~$C$.
This leads to split $\Cc$ in classes which are consistent with every
class~$D$.
We call it the \emph{Domain Split Algorithm}
(\DSA for short).

\item the second algorithm
iterates over classes $D$ and considers in the same iteration all the
pairs $(C,D)$ where~$C$ is any class which contains some
\emph{predecessors} of states of~$D$.
This leads to split several classes in such a way that all classes
become consistent with the class~$D$.
We call it the Predecessor Class Split
Algorithm (\PCSA for short).
\end{itemize}

%%%%%%%
\subsection{The Domain Split Algorithm}
\label{ss.domain}%

The Domain Split Algorithm is an extension of the classical Moore
algorithm~\cite{Moor56} for the minimisation of deterministic Boolean
automata.
At each iteration of this algorithm, a class~$C$ is processed.
The \emph{global signature} of every state of~$C$ with respect to the
current partition is computed, and~$C$ is split accordingly.
Compared to the protoalgorithm, the Domain Split Algorithm amounts
therefore to consider at  %\NeM{in $->$ at !!}%
the same time all the pairs~$(C,D)$,
where~$C$ is the current class, and~$D$ is any class of the current
partition (actually, only classes which contain some successors of
states of~$C$ are considered).
Notice that it is mandatory to recompute signatures at each iteration,
since they depend on the partition which is changing.

At the beginning of each iteration, the current class~$C$ is extracted
from a queue.
At the beginning of the algorithm, the partition~$\Pc_{0}$ is
$\{\{i\},Q,\{t\}\}$, and~$Q$ is inserted in the queue.
At the end of each iteration, the classes obtained from the split
of~$C$ --- or~$C$ itself if it has not been split --- are inserted into
the queue, except the singletons that cannot be split.

To insure termination, iterations are gathered to
form rounds.
%the
A round ends when all classes that were inside the queue at the
beginning of the round have been extracted.
Hence the number of iterations in a round is equal to the number of
classes which are in the queue at the beginning of the round.
Notice that, for every state~$p$, there is at most one class
containing~$p$ that is processed during a round.
%it
If there is no split during a round, the algorithm has checked that the
partition is a congruence and halts.

Otherwise, the partition is strictly refined and the number of classes
strictly increases.
At the beginning of the algorithm, there are three classes, and the
maximal number of classes is~$n+2$, where $n$ is the number
of true states of the automaton.
Hence, including the last round, there are at most~$n$ rounds, and,
for every state, at most~$n$ global signatures are computed.

\begin{example}[continued]
On the automaton~$\Ac_1$, the partition~$\Pc_{0}$
is initialised with $D_1=\{i\}$, $D_2=\{p,q,r\}$, $D_3=\{t\}$.
Class~$D_2$ is put into the queue (the other classes are singletons and
cannot be split).
For every state of~$D_2$ and for every class~$D$, one can compute the
signature of this state with respect to~$D$.
For instance, the signature of~$p$ with respect to~$D_2$ is
\begin{align*}
\sig{p,D_2}{a}&= E(p,a,p)+E(p,a,q)+E(p,a,r)=-1+0+0=-1\eqvrg\\
\sig{p,D_2}{b}&= E(p,b,p)+E(p,b,q)+E(p,b,r)=0+-1+2=1\eqpnt
\end{align*}
From the signature with respect to each class, the global signature of
the state can be computed.

The global signature (with respect to~$\Pc_{0}$) of states in~$D_2$ is:
\begin{align*}
\glsigg{p}& =\{(D_2,a)\mapsto -1,(D_2,b)\mapsto 1\}\eqvrg\\
\glsigg{q}& =\{(D_2,a)\mapsto 2,(D_2,b)\mapsto 1,(D_3,\mmark)\mapsto 1\}\eqvrg\\
\glsigg{r}& =\{(D_2,a)\mapsto 2,(D_2,b)\mapsto 1,(D_3,\mmark)\mapsto 1\}\eqpnt
\end{align*}
States~$q$ and $r$ share the same global signature which is different
from the global signature of~$p$.
The class~$D_2$ is then split into two classes~$D_{21}=\{p\}$ and
$D_{22}=\{q,r\}$ and the new partition is
$\Pc_{1}=\{\{i\},\{t\},\{p\},\{q,r\}\}$.
The new round starts and the class~$D_{22}$ is put in the queue.
The global signature of states (with respect to~$\Pc_{1}$) in~$D_{22}$ is:
\begin{align*}
\glsigg{q}& =\{(D_{21},a)\mapsto 1,(D_{22},a)\mapsto 1,(D_{22},b)\mapsto 1,(D_3,\mmark)\mapsto 1\}\\
\glsigg{r}& =\{(D_{21},a)\mapsto 1,(D_{22},a)\mapsto 1,(D_{22},b)\mapsto 1,(D_3,\mmark)\mapsto 1\}
\end{align*}
Both states have the same global signature, thus the class is not split.
The round ends without any splitting, hence the current
partition~$\Pc_{1}$ is a congruence.
\end{example}

\subsection{Predecessor Class Split Algorithm}
\label{ss.pred-alg}%

The Predecessor Class Split Algorithm is \emph{inspired} by the
Hopcroft algorithm for the minimisation of deterministic Boolean
automata.
At each iteration of this algorithm, a class~$D$ is processed.
For every state~$p$ which is a predecessor of some states of~$D$, the
signature of~$p$ with respect to~$D$ is computed, and~$C$ is split accordingly.
Compared to the protoalgorithm, \PCSA amounts therefore to consider
at the same time all the pairs~$(C,D)$, where $D$ is the current
class, and $C$ is any class of the current partition (actually, only
classes which contain some predecessors of states of~$D$ are
considered).

Like the \DSA, the \PCSA uses a queue to schedule the process of classes.
At the beginning of the algorithm, every class of~$\msp\Pc_0\msp$ is put
into the queue, except~$\{i\}$ since there is no predecessor of~$i$.
If a class~$C$ is split, every new class is put in the queue; if~$C$
was in the queue, it is replaced by its subclasses.
The algorithm halts when the queue is empty.
Actually, when the partition is a congruence, every class extracted
from the queue does not induce any splitting, hence there is no more
insertion of classes.

Apart from the initialisation of the queue, every
class which is inserted in the queue comes from a split.
Hence, for every state~$p$, the number of times that a class
containing~$p$ is extracted from the queue is at most equal to~$n$, where $n$
is the number of true states of the automaton.
Notice that there is no notion of rounds in the \PCSA.

\begin{example}[continued]

On the automaton~$\Ac_1$, the equivalence is initialised
with~$D_1=\{i\}$, $D_2=\{p,q,r\}$, $D_3=\{t\}$.
Classes~$D_2$ and~$D_3$ are put in the queue ($i$ has no predecessor,
thus~$D_1$ can not split any class).
Assume that~$D_2$ is considered first.
The signatures of predecessors of states in~$D_2$ are:
\begin{align*}
\sigf{i,D_2}=&\{\mmark\mapsto 3\}\eqvrg&
\sigf{p,D_2}=&\{a\mapsto -1, b\mapsto 1\}\eqvrg\\
\sigf{q,D_2}=&\{a\mapsto 2, b\mapsto 1\}\eqvrg&
\sigf{r,D_2}=&\{a\mapsto 2, b\mapsto 1\}\eqpnt
\end{align*}
All states in the class~$D_1=\{i\}$ have the same signature ($D_1$ is
a singleton!).
In the class~$D_2$, every state is met, but~$p$ has a signature which
is different from the signature common to~$q$ and~$r$.
Hence~ $D_2$ is split into two classes~$D_{21}=\{p\}$
and~$D_{22}=\{q,r\}$ which are inserted in the queue.

\medskip
The next class which is extracted from the queue is~$D_3$, and the
signatures of predecessors of states in~$D_3$ are:
\begin{align*}
\sigf{q,D_3}=&\{\mmark\mapsto 1\}\eqvrg&
\sigf{r,D_3}=&\{\mmark\mapsto 1\}\eqpnt
\end{align*}
Both states have the same signature and there is no other state in
$D_{22}$, thus there is no split.

\eject

The class processed in the next iteration is~$D_{21}$, and the
signatures of predecessors of states in~$D_{21}$ are:\vspace*{-2mm}
\begin{align*}
\sigf{i,D_{21}}=&\{\mmark\mapsto 2\}\eqvrg&
\sigf{p,D_{21}}=&\{a\mapsto -1\}\eqvrg\\
\sigf{q,D_{21}}=&\{a\mapsto 1\}\eqvrg&
\sigf{r,D_{21}}=&\{a\mapsto 1\}\eqpnt
\end{align*}
States~$i$ and~$p$ are already in classes which are singletons;
states~$q$ and~$r$ have the same signature and there is no other state
in~$D_{22}$, thus there is no split.

\medskip
The next processed class is~$D_{22}$, and the signatures of
predecessors of states in~$D_{22}$ are:
\begin{align*}
\sigf{i,D_{22}}=&\{\mmark\mapsto 1\}\eqvrg&
\sigf{p,D_{22}}=&\{b\mapsto 1\}\eqvrg\\
\sigf{q,D_{22}}=&\{a\mapsto 1, b\mapsto 1\}\eqvrg&
\sigf{r,D_{22}}=&\{a\mapsto 1, b\mapsto 1\}\eqpnt
\end{align*}
States~$i$ and~$p$ are already in classes which are singletons;
states~$q$ and~$r$ have the same signature and there is no other state
in $D_{22}$, thus there is no split.

\medskip
The queue is empty; the algorithm halts and the current
partition is a congruence.
\end{example}

%%%%%%
\section{Complexity of the refinement algorithms}
\label{s.imp-alg}%

The high-level description of the \DSA and \PCSA
% Domain Split and the PCS algorithms
have shown that every state is processed at most~$n$ times, where~$n$
is the number of true states.
Processing a state consists in running over its
outgoing transitions (for \DSA) %the Domain algorithm)
or its incoming
transitions (for \PCSA) %the PCS algorithm)
to compute signatures.
Globally, every transition of the automaton is thus considered at
most~$n$ times.

To be efficient, the time to compute the signatures must be linear in
the \emph{number of transitions} involved in the computation.
The key point in the algorithms is the ability to compute signatures
in such a way that the signatures of two states can be compared in
linear time.
Usually, to compare two lists efficiently, a preliminary step consists
in sorting them.
To reach the linear complexity, a true sort is not affordable.
Hence, we present here a \emph{weak sort}~\cite{Paig94} which allows to
compare signatures in linear time.

\subsection{The weak sort}

Let~$f$ be an evaluation function from a set~$X$ to a set~$Y$.
In our framework, the evaluation function is the signature.
If~$Y$ is totally ordered, sorting a list of elements of~$X$ with
respect to~$f$ consists in computing a permutation of the list such
that the elements of the list are ordered with respect to the
evaluation~$f$.
If such a list is sorted, the elements with the same evaluation are
contiguous and it is then easy to gather them.
Following the ideas given in~\cite{Paig94}, it is not necessary to
fully sort the list, and we say that a \emph{list} of elements of~$X$
is \emph{weakly sorted} (with respect to~$f$) if the elements with the
same evaluation by~$f$ are \emph{contiguous}.
It is then easy to compute the map equivalence of~$f$.

\medskip
Both \DSA and \PCSA %Domain Split and PCS algorithms
are based on a weak sort with respect to signatures.
Since signatures are themselves lists, nested weak sorts are necessary
to gather states with the same signatures such that equal signatures
are represented by the same lists.

In a first step, and for every state~$p$, transitions outgoing
from~$p$ with the same label and with destinations in the same class
must be gathered.

Moreover, for two states with the same signature, the list of pairs
(label $\mapsto$ weight) that form the signature must appear in the same
ordering, in such a way the equality test is insured to be efficient.

In a second step, the weak sort is used to gather states
with the same signature and to form the new classes.

\medskip
A \emph{bucket sort algorithm}~\cite{CorLeiRivSte09} realises a weak sort with linear
complexity in the size of~$X$.
%%%\NeM{reference added!!}%
Assume first that arrays indexed by~$Y$ can be managed.
Let~$T$ be such an array where elements are lists of values in~$X$.
The algorithm iterates over~$X$ and stores every~$x$ in~$X$ in the
list~$T[f(x)]$.
Finally, by iterating over~$T$, all the lists are concatenated in
order to compute a weakly sorted list.

This naive description hides two issues.
First, $Y$ may be large, and the initialisation of every element
of~$T$ to the empty list is linear in~$Y$, which can be much larger
than~$X$.
The second issue is related to the first one, there may be a few~$y$
in~$Y$ which are images of some~$x$ by~$f$, and it is too expensive to
iterate over all elements of~$Y$ to concatenate the lists.

\medskip
The \emph{hash maps} are a solution to the first issue
(see~\cite{Knuth98, CorLeiRivSte09} for instance).
They allow to avoid the blowing up of memory in the case where~$Y$ is
huge.
As the amortised access time is constant, it allows to implement the
weak sort in linear time.
To solve the second issue, the keys of the hash map can be linked in
order to efficiently iterate over the elements of~$f(X)$ (the data
structure is then equivalent to the \emph{linked hash maps}
implemented for instance in Java~\cite{JavaLHM21}).

\subsection{Application to the Domain Split Algorithm}

The computation of the global signature $\GSig{p}$ of the states of
the current class~$C$ requires two steps.

\medskip
First, an array indexed by~$\Adol\times\Pc$ stores lists of pairs
in~$Q\times\K$.
For every state~$p$ in~$C$, and for every transition
$p\overset{a|k}{\longrightarrow}q$, the pair~$(p,k)$ is
inserted in a list $\meet{a,D}$, where~$D$ is the class of~$q$.
This insertion is special in the case where $\meet{a,D}$ is not empty
and its last element is a pair $(p',k')$ with $p'=p$:
$k'$ is then updated to~$k'+k$; if~$k'+k$ is zero, the pair is
removed from the list.
The second step builds the global signature itself.
For every useful index~$(a,D)$, for
every~$(p,k)$ in $\meet{a,D}$, inserts $(a,D)\mapsto k$ into $\GSig{p}$.

We see that the elements in $\GSig{p}$ follow an ordering which is
consistent with the ordering of iteration on $\meetop$.
Two states with the same global signatures have therefore the same
list.

\subsection{Application to the Predecessor Class Split Algorithm}
%%\NeM{typo!!}%

For \PCSA, the computation of the signature is slightly
easier.
The first array is only indexed by~$\Adol$.
For each~$q$ in~$D$, for every transition
% \code{p}$\overset{\code{a}|\code{k}}{\longrightarrow}$\code{q},
$p\overset{a|k}{\longrightarrow}q$,
$(p,k)$ is inserted in a list $\meet{a}$.
Then, for every key~$a$ of $\meetop$, for every~$(p,k)$ in $\meet{a}$,
$a\mapsto k$ is inserted in $\sigf{p,D}$ with the same special insertion
used in \DSA. %the Domain Spit algorithm.
Notice that this special insertion occurs here in the second step
whereas it is used in the first step in \DSA.

\subsection{Splitting of classes}

In \DSA %the Domain Split Algorithm,
the current class is split, the signature of every state
of the class is considered, and it is not difficult to split the class
in linear time with respect to its size.

In \PCSA %the PCS algorithm,
the current class~$D$ is the splitter; it may induce the splitting of
several classes, and all the operations must be performed in a time
which is linear with respect to the number of transitions incoming to
states of~$D$.
In particular, the splitting of a class must be performed in a time
which is linear with the number of the states which are predecessors
of a state of~$D$ --- that may be much smaller than the size of the
class.

To this end, it is required that the deletion of any element of a class
can be performed in constant time.
Thus classes are implemented by double linked lists, and an array
indexed by states gives, for each state, a pointer to the location of
the state in its class.

\subsection{Analysis of both algorithms}

In \DSA, as seen in Section~4.3,
every state may be considered~$n$ times, where~$n$ is the number of
states.
The computation of the global signature of a state requires a number
of operations which is linear with the number of transitions outgoing
from this state.
Every iteration requires a time which is linear with the number of
transitions outgoing from the current class.
Finally, the time complexity of \DSA is
in~$\grando{n(m+n)}$, where~$n$ is the number of states and~$m$ is the
number of transitions of the automaton, provided that each operation
and the computation of a hash for the weights is in constant time.

In \PCSA, every state may be considered~$n$ times.
The computation of the signatures during
an iteration is linear with the number of transitions incoming to the
current class.
Finally, the time complexity of \PCSA is in~$\grando{n(m+n)}$,
under the same conditions as above.

\begin{theorem}
The Domain Split Algorithm and the Predecessor Class Split Algorithm
compute the minimal quotient of a
$\K$-automaton with~$\xmd n\xmd$ states and~$\xmd m\xmd$ transitions
in time $\grando{n(m+n)}$.
\end{theorem}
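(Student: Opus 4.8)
The plan is to separate \emph{correctness} (both algorithms terminate and return the minimal quotient) from the \emph{complexity} estimate, reducing the former to the protoalgorithm of \propo{coa-con}. Both the \DSA and the \PCSA are faithful realisations of the protoalgorithm: every elementary operation they perform refines~$\Pc$ by splitting one or more classes according to the signature with respect to some class~$D$, that is, by a refinement of $\Pc\splteq\spltf{C,D}$ for splitting pairs~$(C,D)$. Hence the invariant of \propo{coa-con} is preserved throughout --- at every stage~$\Pc$ is coarser than or equal to the coarsest congruence~$\Cc$ --- and as soon as either algorithm halts on a partition that \emph{is} a congruence, that partition is both $\geq\Cc$ and a congruence, so it equals~$\Cc$; by \theor{coa-con} (equivalent to \theor{min-quo}) the associated quotient is then the minimal one. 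It therefore remains to check, for each algorithm, that it halts and that the partition it reaches has no splitting pair.

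For the \DSA this is straightforward. A round processes every non-singleton class exactly once and splits it so as to be consistent with \emph{all} classes simultaneously, through the global signature. If a whole round produces no split, then every class is consistent with every class of~$\Pc$, so by \propo{con} the partition is a congruence and the algorithm stops. Since each round that does split strictly increases the number of classes, and this number is bounded by~$n+2$, there are at most~$n$ rounds, which both guarantees termination and caps the running time.

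For the \PCSA the analogous statement --- an empty queue certifies a congruence --- is the main obstacle, being the counterpart of the key invariant of Hopcroft's algorithm. I would establish it through a persistence property: once a class~$D$ is extracted and used as a splitter, every class of the current partition becomes consistent with~$D$ (all its states share the same signature with respect to~$D$), and this property is preserved under any later refinement, since a subclass of a class consistent with~$D$ is again consistent with~$D$. A \emph{final} class~$D$ is put into the queue when it is created (or belongs to~$\Pc_0$), and, being final, it is extracted in its definitive form and never split afterwards; hence consistency of all classes with~$D$ holds from that extraction until the end. Consequently, when the queue is empty every pair~$(C,D)$ of final classes is non-splitting, so by \propo{con} the partition is a congruence. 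Termination follows since every class inserted into the queue arises from a genuine split and the total number of classes ever created is~$\grando{n}$.

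Finally, for the complexity I would invoke the weak sort: a bucket sort backed by (linked) hash maps groups equal signatures in time linear in the total size of the signatures processed, with amortised constant-time access. In the \DSA every state is processed at most once per round, hence at most~$n$ times in all; one processing scans the outgoing transitions of the state, contributing~$\grando{1}$ per transition, so a round costs~$\grando{m+n}$ (transition scans plus the per-state and per-class overhead for states and classes carrying no transition), and the~$\grando{n}$ rounds yield~$\grando{n(m+n)}$. In the \PCSA the classes containing a fixed state form a refining chain of length at most~$n$, so a class containing a given state is extracted at most~$n$ times; each extraction of~$D$ scans the transitions incoming to~$D$ and performs the class splittings in time linear in the number of predecessors of~$D$, using doubly linked lists together with a state-indexed pointer array for constant-time deletion. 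Thus each transition is scanned at most~$n$ times, for a total of~$\grando{nm}$, to which the~$\grando{n}$ extraction overhead is added, again within~$\grando{n(m+n)}$. Both bounds assume that each semiring operation and each hash computation on a weight takes constant time.
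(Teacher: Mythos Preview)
Your proposal is correct and follows essentially the same route as the paper: correctness is inherited from the protoalgorithm via \propo{coa-con} and \corol{coa-con}, and the complexity is obtained by bounding the number of times each state (hence each transition) is processed by~$n$, together with the linear-time weak sort. The one place where you go beyond the paper is the \PCSA halting condition: the paper simply asserts that an empty queue means the partition is a congruence, whereas you supply the persistence argument (consistency with a splitter~$D$ survives later refinements, and every final class is eventually extracted in its definitive form). This is a genuine addition of rigour rather than a different approach.
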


In the case of a deterministic Boolean automaton,
where~$\msp m=\alpha\xmd n\msp$, with~$\alpha=|{A}|$, we get back to the
classical~$\grando{\alpha\xmd n^{2}}$ complexity of the Moore
minimisation algorithm (\cf~\cite{HopcMotwUllm06,BersEtAl21} for instance).
It may seem strange that \PCSA which has been said to be inspired by
Hopcroft's algorithm has the same complexity.
This is explained in the next section.

\section{Conditions for an $\grando{m\log n}$ algorithm}
\label{s.fas-PCS}%

Hopcroft's algorithm can be seen as an improvement of \PCSA for
complete Boolean deterministic automata.
Its time complexity is~$\grando{\alpha\xmd  n\log n}$ (\cf for
instance~\cite{Grie73,BersCart04}), where~$\alpha$ is the size of the
alphabet;
this algorithm has been
extended to incomplete DFA~\cite{BealCroc08,ValmLeht08} with
complexity~$\grando{m\log n}$.

The strategy: ``All but the largest", introduced in~\cite{Hopc71},
can be applied to improve \PCSA in some cases that
we now study.

At every step of \PCSA, some classes~$C$ are evaluated
(through signatures) with respect to the current splitter~$D$.
If the class~$D$ is split into several classes, $D_1,\ldots, D_k$,
all these classes are processed as splitters in further iterations.

The idea of the ``All but the largest" strategy is that it is useless
to process the last of the subclasses of~$D$ because after the splits
induced by~$D$ itself and the splits induced by all the other
subclasses, this subclass does not induce any new split.
If this is true, one can choose which subclass is not processed; in order
to get a better complexity, the strategy commands to choose the larger
one.

\subsection{Simplifiable signatures}
A sufficient condition to apply this strategy is that the signatures
with respect to the last subclass can be deduced from the signatures
with respect to~$D$ and the signatures with respect to the other
subclasses.

\medskip
The signatures are equipped with the pointwise addition:
for every~$a$ in~$\Am$,
% $\msp\big(\sigf{p,D}+\sigf{p,D'}\big)(a)=\sig{p,D}{a}+\sig{p,D'}{a}\msp$,
\begin{equation}
    \forall a\in\Am\quantsp
\big(\sigf{p,D}+\sigf{p,D'}\big)(a)=\sig{p,D}{a}+\sig{p,D'}{a}
\eqvrg
\notag
\end{equation}
and if~$D$ is a subset of~$Q$ and~$\psi$ a partition of~$D$, then it
holds:
% $\msp\sigf{p,D}=\sum_{D'\text{ class of }\psi}\sigf{p,D'}\msp$.
\begin{equation}
\sigf{p,D}=\sum_{D'\text{ class of }\psi}\sigf{p,D'}
\eqpnt
\notag
\end{equation}

\begin{definition}
An automaton has \emph{simplifiable signatures} if, for every
subset~$D$ of~$Q$ and every subset~$C$ of~$D$, and for every pair of
states~$p,q$, it holds
\begin{equation}
\sigf{p,D}=\sigf{q,D}\text{ and }
\sigf{p,C}=\sigf{q,C}\Longrightarrow
\sigf{p,D\setminus C}=\sigf{q,D\setminus C}.
\notag
\end{equation}
\end{definition}

A commutative monoid~$(M,\oplus)$ is \emph{cancellative} if
for every $a$, $b$, and~$c$ in~$M$,
$a\oplus b=a\oplus c$ implies~$b=c$. In particular,
every group is cancellative, and if~$\K$ is a ring, the additive monoid~$(\K,+)$ is cancellative.
% ({\it i.e.} $\forall a,b,c,\ a+b=a+c\Rightarrow b=c$),
% the signatures of every $\K$-automaton are simplifiable.
\begin{lemma}
Let~$\K$ be a semiring such that the additive monoid~$(\K,+)$ is cancellative.
Then every $\K$-automaton has simplifiable signatures.
\end{lemma}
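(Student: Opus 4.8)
The plan is to verify the implication directly from the definition of simplifiable signatures, using the cancellativity of $(\K,+)$ together with the additivity of signatures over a partition. Fix a subset~$D$ of~$Q$, a subset~$C$ of~$D$, and two states~$p,q$ satisfying the two hypotheses $\sigf{p,D}=\sigf{q,D}$ and $\sigf{p,C}=\sigf{q,C}$. The goal is to establish $\sigf{p,D\setminus C}=\sigf{q,D\setminus C}$, which means these two signatures agree as maps, \ie they take the same value at every letter $a$ in~$\Am$.

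First I would record the key structural fact: since $\{C,\, D\setminus C\}$ is a partition of~$D$, the additivity of signatures gives, for every~$a$,
\begin{equation}
\sig{p,D}{a}=\sig{p,C}{a}+\sig{p,D\setminus C}{a}
\notag
\end{equation}
and likewise for~$q$. This is the only place where the decomposition of~$D$ enters; everything else is pointwise in~$a$.

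Next I would fix an arbitrary letter~$a$ in~$\Am$ and work in the additive monoid~$(\K,+)$. The first hypothesis gives $\sig{p,D}{a}=\sig{q,D}{a}$ and the second gives $\sig{p,C}{a}=\sig{q,C}{a}$. Writing $x=\sig{p,C}{a}=\sig{q,C}{a}$, $y=\sig{p,D\setminus C}{a}$, and $y'=\sig{q,D\setminus C}{a}$, the additivity relation above turns the equality $\sig{p,D}{a}=\sig{q,D}{a}$ into $x+y=x+y'$. By cancellativity of $(\K,+)$, this forces $y=y'$, that is $\sig{p,D\setminus C}{a}=\sig{q,D\setminus C}{a}$. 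Since~$a$ was arbitrary, the two signature maps coincide, which is exactly $\sigf{p,D\setminus C}=\sigf{q,D\setminus C}$.

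I do not expect a genuine obstacle here: the statement is essentially the assertion that cancellativity lifts from single weights to signatures, and the additivity of signatures over a partition (stated just before the definition) reduces everything to a single cancellation in~$\K$ at each letter. The one point to handle carefully is the matching of domains of the signature maps --- one should treat a signature as a total function $\Am\to\K$ with value $\zeK$ off its support (as the paper does), so that the pointwise equations make sense at every letter, including those where some of the three quantities vanish; with that convention the argument is uniform and no case distinction on supports is needed.
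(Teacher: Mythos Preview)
Your proof is correct and is exactly the argument the paper has in mind: the paper states the lemma without proof, immediately after recording the pointwise additivity of signatures over a partition, so the intended justification is precisely the one-line cancellation you spell out. Your care about treating signatures as total maps $\Am\to\K$ (with value~$\zeK$ off the support) is appropriate and matches the paper's convention.
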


% In other semirings, it may depend on some properties of the automaton.
For other weight semirings, the simplifiability of signatures depends
on the automaton.
%properties of
% For instance,
If~$\Ac$ is a deterministic\footnote{%
called \emph{sequential} in~\cite{Saka09}.}
% would be more accurate for weighted automata.}
 $\K$-automaton, that is, if
for every state~$p$ and every letter~$a$, there is at most one
transition outgoing from~$p$ with label~$a$, then
the signatures are simplifiable, independently of~$\K$, since it
holds:
\begin{equation}
\forall p\in Q\quantvrg\forall a\in\Am\quantsmsp
\sig{p,D\setminus C}{a}=\begin{cases}
\sig{p,D}{a}&\text{ if } \sig{p,C}{a}=0_\K\eqvrg\\
0_\K &\text{ otherwise}\eqpnt
\end{cases}
\e
\notag
\end{equation}

Typically, incomplete deterministic Boolean automata, as considered
in~\cite{BealCroc08}, have simplifiable signatures whereas general
Boolean automata have not.

\begin{example}
% Assume $\K$ is the Boolean semiring.
Let $\Ac_2$ be the nondeterministic Boolean automaton of Figure~\ref{fig.nfa}.
It holds:
\begin{align*}
\sig{p,\{r,s\}}{a}&=1 \qquad& \sig{q,\{r,s\}}{a}&=1\\
\sig{p,\{s\}}{a}&=1 \qquad& \sig{q,\{s\}}{a}&=1\\
\sig{p,\{r\}}{a}&=1 \qquad& \sig{q,\{r\}}{a}&=0.
\end{align*}
The signature with respect to~$\{r\}$ cannot be deduced from the
signature with respect to~$\{r,s\}$ and~$\{s\}$.
Hence, $\Ac_2$ has not simplifiable signatures.
\begin{figure}[h]
\vspace*{-1mm}
\centering
\includegraphics[width=4.15cm]{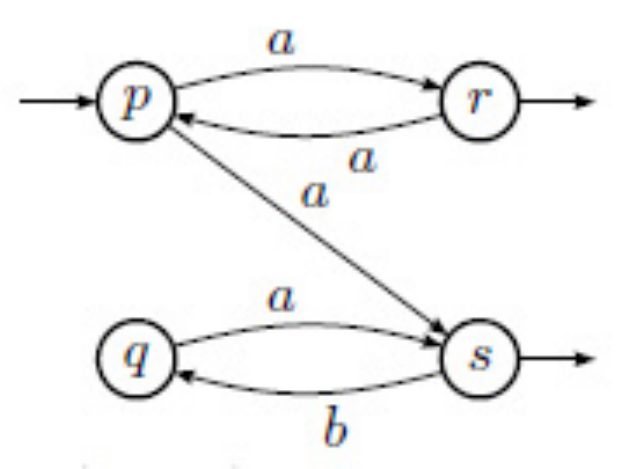}\vspace*{-4mm}
\caption{The nondeterministic automaton $\Ac_2$}\label{fig.nfa}
\end{figure}\vspace*{-3mm}
\end{example}

\subsection{The ``All but the largest'' strategy}
If~$\Ac$ is an automaton with simplifiable signatures,
\PCSA can be improved. We call this improved algorithm
Fast Predecessor Class Split algorithm (\FPCSA).

For every
class~$C$ which is split with respect to a class~$D$, if~$C$ is
not already in the queue, all the classes in~$\spltf{C,D}$ except one of the largest are put in the queue. Notice that finding the largest class can be done in linear time with respect to the size of~$\spltf{C,D}$: the size of every subclass containing some predecessor of a state of~$D$ is computed at the same time as the signatures, the size of the class containing the other states is the difference of the size of~$C$ with the sum of the sizes of the other subclasses.

Actually, since~$C$ is not in the queue, the splitting of
classes with respect to~$C$ has already been considered: for every~$D$
in the current partition, $\sigf{p,C}$ is the same for all~$p$
in~$D$.
Let~$C_1$ be some subclass of~$C$; if~$(D,C_1)$ is a splitting pair
for some class~$D$, then, since signatures are simplifiable, there
exists some other class~$C_2$ in~$\psi$ such that~$(D,C_2)$ is also a
splitting pair.

\medskip

Let~$c(k)$ be the maximal number of times that a state~$p$ which
belongs to a class~$D$ of size~$k$ that is removed from the queue will
appear again in classes removed from the queue.
A class~$D'$ containing~$p$ will be inserted only if~$D'$ results from
a split of~$D$ and there exists another class~$D''$ that results from
the same split and whose size is at least as large as the size of~$D$.
Hence, the size of~$D'$ is at most $k/2$.
Finally $c(k)\leqslant 1+c(k/2)$, and, since a singleton class will
not be split, $c(1)=0$; therefore~$c(k)$ is in $\grando{\log k}$.
Thus, the complexity of the algorithm is in $\grando{(m+n)\log n}$.
This complexity meets the complexity of the Hopcroft
algorithm~\cite{Hopc71} for the minimisation of complete deterministic
automata which is in~$\grando{\alpha n\log n}$, where~$\alpha$ is the
size of the alphabet; in this case, $m=\alpha n$.

\begin{theorem}
If~$\Ac$ is a $\K$-automaton with simplifiable signatures, \FPCSA
computes the minimal quotient of~$\Ac$ in time~$\grando{(m+n)\log n}$.
\end{theorem}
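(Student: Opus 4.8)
The plan is to establish two independent facts and combine them: first, that \FPCSA computes the correct answer (the coarsest congruence), and second, that it runs in the claimed time. Correctness reduces to showing that skipping one of the largest subclasses never loses a necessary split, and the timing reduces to the $c(k)\in\grando{\log k}$ recurrence already sketched in the text just before the statement. Since the proof of \PCSA's correctness and its $\grando{n(m+n)}$ analysis are already available, the whole task is to justify that the ``All but the largest'' optimisation is \emph{sound} (it discards only redundant work) and then to re-run the amortised accounting with the improved bound on how often a state is reprocessed.

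For correctness, I would argue that \FPCSA computes the same partition as \PCSA, i.e. the coarsest congruence (\corol{coa-con}). The key claim is the one already foreshadowed: when a class~$C$ not in the queue is split into $\spltf{C,D}=\{C_1,\dots,C_k\}$ and we omit one subclass~$C_j$ of maximal size from the queue, no splitting pair is lost. Suppose for contradiction that~$(D',C_j)$ is a splitting pair for some later class~$D'$, so that the states of~$D'$ do not all share the same signature with respect to~$C_j$. Because~$C$ was removed from the queue, $\sigf{p,C}$ was already found constant across each class~$D'$ in the partition at that time; since $C=\bigcup_{l} C_l$ and signatures are additive over a partition of~$C$, we have $\sigf{p,C}=\sum_{l}\sigf{p,C_l}$. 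As \emph{all} the other subclasses~$C_l$ ($l\neq j$) are placed in the queue and processed, the signatures with respect to each of them are eventually equalised on~$D'$. Then \emph{simplifiability of signatures} applied with $D=C$ and the union $C\setminus C_j=\bigcup_{l\neq j}C_l$ forces $\sigf{p,C_j}$ to be equal across~$D'$ as well, contradicting that $(D',C_j)$ is a splitting pair. Hence the omitted subclass induces no split that the others do not already induce, and the output is unchanged.

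For the complexity, I would reuse the amortised argument that is essentially written out in the paragraph preceding the statement, and present it cleanly. Define $c(k)$ as the maximal number of times a state~$p$ lying in a dequeued class of size~$k$ is later found in a dequeued class. A class~$D'$ containing~$p$ is enqueued only if it arises from a split of~$D$ \emph{and} some sibling subclass~$D''$ from the same split has size at least that of~$D'$; since we omit a largest sibling, any enqueued~$D'$ satisfies $|D'|\leqslant k/2$. This yields $c(k)\leqslant 1+c(k/2)$ with $c(1)=0$, so $c(k)=\grando{\log k}$ and a fixed state is processed $\grando{\log n}$ times overall. Each processing of a class costs time linear in the number of transitions \emph{incoming} to it (computing signatures via the weak sort in linear time, and splitting via doubly linked lists with constant-time deletion, as described in \secti{imp-alg}); summing a given state's $\grando{\log n}$ appearances over all states charges each transition $\grando{\log n}$ times, giving $\grando{(m+n)\log n}$ total, where the additive~$n$ absorbs the per-state bookkeeping. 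Computing the largest subclass adds only linear overhead per split, as noted in the text, so it is subsumed.

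The main obstacle is the correctness step, specifically making the simplifiability hypothesis do exactly the right work. One must be careful that the premises of the \emph{simplifiable signatures} definition are genuinely met at the moment the argument is applied: that $\sigf{p,C}$ is constant on~$D'$ (guaranteed because~$C$ was dequeued earlier) and that $\sigf{p,C\setminus C_j}$ is constant on~$D'$ (guaranteed because every sibling~$C_l$, $l\neq j$, is processed and additivity combines them). The subtle point is that these two conditions must hold \emph{simultaneously} for the same pair of states $p,q\in D'$, and that the refinement ordering does not allow~$D'$ itself to be split apart before the relevant siblings act; verifying this scheduling invariant — rather than the arithmetic — is where the argument requires genuine care.
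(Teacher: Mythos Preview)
Your proposal is correct and follows essentially the same approach as the paper: the correctness argument via simplifiable signatures (your contrapositive formulation is equivalent to the paper's ``if $(D,C_1)$ is a splitting pair then some other $(D,C_2)$ is too'') and the complexity argument via the recurrence $c(k)\leqslant 1+c(k/2)$ are exactly what the paper does in the paragraphs preceding the theorem. You are, if anything, slightly more explicit than the paper in flagging the scheduling invariant needed to make the correctness step airtight, which the paper leaves implicit.
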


In the case of nondeterministic Boolean automata, as we have seen, the
signatures are not simplifiable and the improvement from \PCSA to
\FPCSA is therefore not warranted.
Nevertheless, the Relation Coarsest Refinement algorithm described
in~\cite{PaigTarj87} can be extended in order to compute the minimal
quotient in time~$\grando{m\log n}$, as explained in~\cite{Fern89}.
For weighted nondeterministic automata over other semirings with no
cancellative addition, like the $(\min,+)$-semiring, it is an open
to know whether there exists an algorithm in time~$\grando{m\log n}$ for the computation of the
minimal quotient.
%%%\NeM{change!!}%

\section{Examples and benchmarks}
\label{s.ben-mar}%

\DSA, \PCSA and \FPCSA are implemented in the \textsc{Awali}
library~\cite{Awali}.  We present here a few benchmarks to compare
their respective performances and to check that their execution time
is consistent with their asserted complexity.
Benchmarks have been run on an iMac Intel Core i5 3,4GHz, compiled
with Clang~9.0.0.

\medskip
First, we study a family of automata which is an adaptation of a
family used in~\cite{CasResSci10} to show that the Hopcroft algorithm
requires $\gtheta{n\log n}$ operations.

Let $\varphi$ be the morphism defined on $\{a,b\}^*$ by
$\varphi(a)=a\xmd b$ and $\varphi(b)=a$;
for instance
$\varphi(a\xmd b\xmd a\xmd a\xmd b)=
a\xmd b\xmd a\xmd a\xmd b\xmd a\xmd b\xmd a$.
The $k$-th Fibonacci word is $w_k=\varphi^k(a)$; its length is equal to
the $k$-th Fibonacci number~$F_k$,
hence it is in~$\gtheta{\bigl(\frac{1+\sqrt{5}}{2}\bigr)^k}$.
Notice that for every~$k\geqslant 2$, $w_k=w_{k-1}.w_{k-2}$.
Let $\mathcal{F}_k$ be the automaton with one initial state and a
simple circuit around this initial state with label~$w_k$
(all states are final).

\renewcommand{\arraystretch}{1.2}
\begin{table}[!h]
\vspace*{-5mm}
\caption{Minimisation of $\mathcal{F}_k$}
\label{tab.fib}
\centering
\scalebox{0.89}{
\begin{tabular}{|c|c|cccccc|}
\hline
& $k$       &  14  &  17  &  20 &  23  &  26  &  30 \tabularnewline
&$F_k$      &  $987$  & $4\xmd 181$ & $17\xmd 711$ & $75\xmd 025$ & $317\xmd 811$ & $2\xmd 178\xmd 309$  \tabularnewline
\hline
\DSA & $t$ (s)       & 0.42  & 7.37     & 139& \multicolumn{3}{c|}{\_}\tabularnewline
&$10^{-7} t$/$F_k^2$ &  4.3  & 4.2 & 4.4& \multicolumn{3}{c|}{}\tabularnewline
\hline
\PCSA &$t$ (s)        & 0.010  & 0.045 & 0.257 & 1.36 & 73       & 257     \tabularnewline
&$10^{-7} t$/$k\xmd F_k$ &7.2&6.3&7.3& 7.6   & 6.7       & 7.5      \tabularnewline
\hline
\FPCSA & $t$ (s)                          & 0.006   & 0.025  & 0.140   & 0.70   & 41 & 139      \tabularnewline
&$10^{-7} t$/$k\xmd F_k$ &4.2&3.5& 3.9 & 3.8  & 3.5    & 3.7      \tabularnewline
\hline
\end{tabular} }\vspace{1mm}
\end{table}

We observe on the benchmarks of Table~\ref{tab.fib} that the running
time of the \DSA is quadratic,
while the running time of both \PCSA and \FPCSA
algorithms are in~$\gtheta{k\xmd F_k}$
({\it i.e.} $\gtheta{F_k\log F_k}$, where~$F_k$ is the number of states).
%\NeM{ATTN! valeurs $F_{k}$ fausses pour 26, 28, 30}

The second family is an example where \PCSA and \FPCSA have not the
same complexity.
Notice that these automata are acyclic and there may exist faster
algorithms ({\it cf.}~\cite{BersEtAl21} for specific algorithms for
acyclic Boolean deterministic automata), but this is out of the scope
of this paper.

The $n$-th ``Railroad'' automaton has $2n$ states numbered from~$1$
to~$2n$, and for every $p$ in~$[1;n-1]$, there are transitions from
states~$2p-1$ and~$2p$ to states~$2p+1$ and~$2p$, as described by
Figure~\ref{fig.rail}.
The state~$1$ is initial and both~$2n-1$ and~$2n$ are final.

\begin{figure}[!h]
\centering
\includegraphics[width=5.36cm]{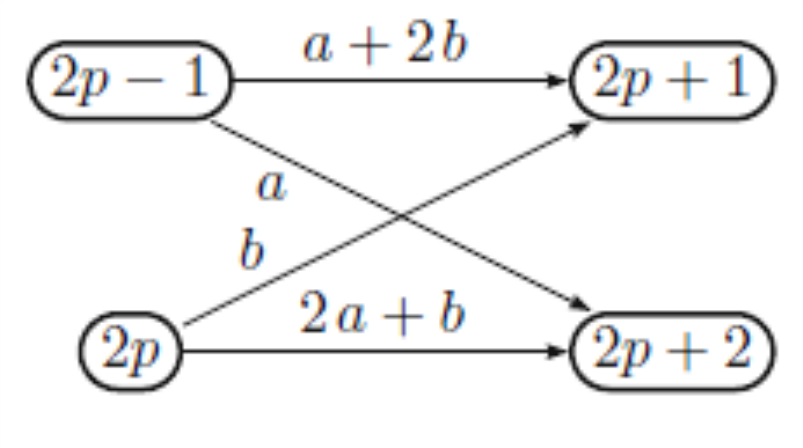}\vspace*{-3mm}
\caption{The transitions of the Railroad $\Z$-automaton}\label{fig.rail}
\end{figure}

The benchmarks of the minimisation of ``Railroad'' automata are shown
on Table~\ref{tab.rail}.
On these automata, if a temporary class contains all states
between~$1$ and~$2k$, it is split into one class~$[1;2k-2]$ and one
class~$\{2k-1,2k\}$: the size of the classes lowers slowly.
On these examples, \DSA and \PCSA are therefore quadratic.

\begin{table}[!h]
\vspace*{-4mm}
\renewcommand{\arraystretch}{1.12}
\centering
\caption{Minimisation of Railroad($n$)}\label{tab.rail}
\scalebox{0.89}{
\begin{tabular}{|c|c|cccccc|}
\hline
$\strut_{\strut}^{\strut}$
&$n$ & $2^{10}$ & $2^{12}$ & $2^{13}$ & $2^{14}$ & $2^{15}$ & $2^{22}$\tabularnewline
\hline
\DSA & $t$ (s)        & 3.29 & 53.2 & 214 & \multicolumn{3}{c|}{\_}\tabularnewline
&$10^{-6} t$/$n^2$ & 3.1 & 3.2 & 3.2 & \multicolumn{3}{c|}{}\tabularnewline\hline
\PCSA & $t$ (s)      & 0.31 & 4.92 & 20.5 & 86.1 & 346& \multicolumn{1}{c|}{\_}\tabularnewline
& $10^{-7} t$/$n^2$& 3.0  & 2.9 & 3.1 & 3.2 & 3.2 & \multicolumn{1}{c|}{}\tabularnewline\hline
\FPCSA & $t$ (s)         & $0.008$ & $0.030$ & $0.061$ & 0.12 & 0.24 & 30.8\tabularnewline
& $10^{-6} t$/$n$ & 7.8  & 7.3 & 7.4 & 7.3 & 7.3 & 7.3\tabularnewline\hline
\end{tabular} }\vspace{1mm}
\end{table}

In \FPCSA, when this splitting occur, the largest class~($[1;2k-2]$)
is not put in the queue for further splittings; therefore, except at
the first round, all splitters are pairs of states and the algorithm
is linear.

\medskip

%==================================================
\end{document}